\newcommand*{\mailto}[1]{\href{mailto:#1}{\nolinkurl{#1}}}
\newtheorem{theorem}{Theorem}[section]
\newtheorem{lemma}{Lemma}[section]
\newtheorem{remark}{Remark}[section]
\newtheorem{prop}{Proposition}[section]
\numberwithin{equation}{section}
\begin{document}

\thispagestyle{empty}

\noindent{\large\bf On the direct and inverse transmission eigenvalue problems for the Schr\"{o}dinger operator on the half line}
\\

\noindent {\bf  Xiao-Chuan Xu}\footnote{School of Mathematics and Statistics, Nanjing University of Information Science and Technology, Nanjing, 210044, Jiangsu,
People's Republic of China, {\it Email:
xcxu@nuist.edu.cn}}
\\

\noindent{\bf Abstract.}
{For the direct problem, we give the asymptotic distribution of the (real and non-real) transmission eigenvalues for the Schr\"{o}dinger operator on the half line. For the inverse problem, we prove that the potential can be uniquely determined by all transmission eigenvalues, the parameter in the boundary condition and some non-zero value related to the potential (whereas without the certain constant $\gamma$). The reconstruction algorithms are also revealed.}

\medskip
\noindent {\it Keywords: }{Transmission eigenvalue; Asymptotic distribution, Inverse spectral problem;
Schr\"{o}dinger operator}

\medskip
\noindent{\it 2010 Mathematics Subject Classification:} 34L40, 34L20, 34A55, 34L25

\section{Introduction}
Consider the following problem $R(q,h):$
\begin{align}
        &-\psi''+q(x)\psi =\lambda\psi, \quad 0<x<1,\label{1}\\
             &- \psi_0''=\lambda\psi_0,\;\qquad\qquad 0<x<1,\label{2}\\
             &\psi'(0)-h\psi(0)=0,\label{3}\\
             &\psi_0'(0)-h\psi_0(0)=0,\label{4}\\
             &\psi_0(1)=\psi(1),\quad \psi_0'(1)=\psi'(1),\label{5}
 \end{align}
where the potential $q(x)$ is real and belongs to $ L^2(0,1)$, $\lambda$ is the spectral parameter, and $h\in \mathbb{R}$.
The $\lambda$-values for which the problem $R(q,h)$ has a  pair of nontrivial solutions $\{\psi,\psi_0\}$  are called \emph{transmission eigenvalues}, which are the energies at which the scattering from the 'perturbed' system agrees with the scattering from the 'unperturbed' system \cite{TA1}.

 Recently, the transmission eigenvalue problems have attracted wide attention  (see, e.g., \cite{TA1,TA2,TA4,BB,SA,SA1,CCG,DC2,DC3,CL,JR,JR1,WS,WX,XC0,XC,XY2,XY3,YC,YB} and the references therein).
 There are many interesting studies for the transmission eigenvalue problem with Dirichlet boundary condition at $x=0$ (i.e., $\psi(0)=0=\psi_0(0)$ instead of (\ref{3}) and (\ref{4}), respectively). However, when it comes to the boundary condition (\ref{3}), only a few has been done \cite{TA1,XY3}. Note that there are important physical problems where the boundary condition (\ref{3}) rather than Dirichlet boundary condition at $x=0$ is appropriate
to use. For example, the inverse scattering problem for determining  the shape of the human  vocal tract \cite{TA3}.

The problem $R(q,h)$ was first considered by Aktosun and Papanicolaou \cite{TA1}, who showed that
 the potential is uniquely determined by the set of all the transmission eigenvalues, the parameter $h$ in the boundary condition and the additional constant $\gamma$ appearing in (\ref{dk}).  In the same paper, they raised an open  question that whether the constant $\gamma$ and the parameter $h$ may be contained in knowledge of transmission eigenvalues. It was shown in \cite{XY3}
 that  the uniqueness will not hold without $\gamma$. Since the constant $\gamma$ has no physical meaning and may not be measured, we should try to find some other information to replace it. In the Dirichlet case, Wei and Xu \cite{WX} suggested to use some knowledge of the potential near $x=1$ to replace $\gamma$. They showed that if $q\in C^{(m)}(1-\delta,1]$ for some $\delta\in(0,1)$, and $q^{(j)}(1)=0$ for $j=\overline{0,m-1}$ and $q^{(m)}(1)\ne0$ is known, then the uniqueness holds without $\gamma$. Later, Xu and Yang \cite{XC0}  provided a reconstruction algorithm corresponding to the uniqueness theorem in \cite{WX} for the case $m=0$. There is still no result for the reconstruction algorithm for the cases $m\ge1$.

 In this paper, we will generalize the result of Wei and Xu into the non-Dirichlet case, by using a completely different but briefer method. The proof in this paper implies a reconstruction algorithm for recovering $q(x)$ from all transmission eigenvalues, $h$ and the non-zero $q^{(m)}(1)$. This reconstruction algorithm can also  be applied to the Dirichlet case.

Note that the problem $R(q,h)$ is non-self-adjoint. Therefore, it is possible that non-real transmission eigenvalues exist. The existence and distribution of the nonreal transmission eigenvalues for the Dirichlet case have been studied in \cite{DC2,DC3,WS,XC,XY2}. In the current paper, we show that in some cases, there exist infinitely many real eigenvalues and at most a finite number of nonreal eigenvalues, and in some other cases, there are infinitely many nonreal eigenvalues and at most a finite number of real eigenvalues. Moreover, we give the asymptotic behavior of the transmission eigenvalues (including real and nonreal). In the Dirichlet case \cite{XY2}, we gave the asymptotics of the non-real eigenvalues without the description of the subscript numbers, which will be supplemented in this paper. The asymptotics of eigenvalues with the description of subscript numbers has important applications in the inverse spectral problems, especially for the stability and reconstruction algorithm (see, e.g., \cite{BB,BN,FY1,RS,RS1}).

The paper is organized as follows. In Section 2, we study the asymptotics of the characteristic function, and give the asymptotic solution of a transcendental equation. In Sections 3, we study the asymptotics of the zeros of the leading function. In Section 4, the asymptotics for the transmission eigenvalues are investigated. In Section 5, we consider the inverse spectral problem for the problem $R(q,h)$. In the last section, some results related to the Dirichlet case are provided.

\section{Preliminaries}

In this section, we provide some properties  the Jost solution of the equation (\ref{1}), which is used to obtain the asymptotics of the characteristic function, and give the known knowledge for seeking the asymptotics of the non-real eigenvalues.

Let $\lambda=k^2$, and $f(k,x)$ be the Jost solution to (\ref{1}), which satisfies $f(k,1)\!=e^{ik}$ and  $f'(k,1)=ike^{ik}$. It is known  \cite{VM} that
\begin{equation}\label{4s}
  f(k,x)=e^{ikx}+\int_x^{2-x}K(x,t)e^{ikt}dt,
\end{equation}
where $K(x,t)$ satisfies
\begin{equation}\label{xx}
  K(x,t)=\frac{1}{2}\int_{\frac{x+t}{2}}^1 q(s)ds+\frac{1}{2}\int_x^1 q(s)\int_{t-(s-x)}^{t+(s-x)}K(s,u)duds,
\end{equation}
\begin{equation}\label{xx1}
  K(x,t)=0\quad \text{if $x+t\ge2$ or $x>t$}.
\end{equation}
From (\ref{xx}) we see that
\begin{equation}\label{xx2}
  K_x(x,t)=-\frac{1}{4}q\left(\frac{x+t}{2}\right)+\frac{1}{2}\int_x^1q(s)[-K(s,t+s-x)-K(s,t-s+x)]ds,
\end{equation}
\begin{equation}\label{xx3}
  K_t(x,t)=-\frac{1}{4}q\left(\frac{x+t}{2}\right)+\frac{1}{2}\int_x^1q(s)[K(s,t+s-x)-K(s,t-s+x)]ds.
\end{equation}
It follows from (\ref{xx1}),  (\ref{xx2}) and (\ref{xx3}) that
\begin{equation}\label{xx4}
  K_x(0,2)= K_t(0,2)=-\frac{1}{4}q\left(1\right), \quad K_t(x,-x)=-\frac{1}{4}q(0),
\end{equation}
\begin{equation}\label{xxs}
  K_t(x,2-x)= -\frac{1}{4}q\left(1\right), \quad K_t(x,2+x)= 0, \quad x>0,
\end{equation}
\begin{equation}\label{xx5}
  K_x(x,x)\!= -\frac{1}{4}q\left(x\right)-\frac{1}{8}\!\left[\int_x^1\!q(s)ds\right]^2, \quad K_t(x,x)\!=\! -\frac{1}{4}q\left(x\right)+\frac{1}{8}\!\left[\int_x^1\!q(s)ds\right]^2.
\end{equation}
Here we have used the formula
\begin{equation*}
  \int_{t_n}^1\!\int_{t_{n-1}}^1\!\cdot\cdot\cdot \!\int_{t_1}^1\!\left[\prod_{j=0}^{n-1}q(t_j)\right]dt_0\cdot\cdot\cdot dt_{n-1}=\frac{1}{n!}\!\left(\!\int_{t_n}^1q(s)ds\!\right)^n,\quad t_n\le t_{n-1}\le\cdot\cdot\cdot\le t_0.
\end{equation*}
Denote
\begin{equation}\label{5s}
F(k):=-i[f'(k,0)-hf(k,0)].
\end{equation}
It was shown in \cite{TA1} that the transmission eigenvalues of the problem $R(q,h)$ coincide with the  zeros of the  \emph{characteristic function} $\Delta(\lambda):=D(k)$, where
\begin{equation}\label{6s}
  D(k)=\frac{1}{2i}[F(k)+F(-k)]-\frac{h}{2k}[F(k)-F(-k)].
\end{equation}
It is obvious that $D(k)$ is an even and entire function of $k$ of exponential type. By the Hadamard's factorization theorem, we have
\begin{equation}\label{dk}
  D(k)=\gamma E(k),\quad E(k):=k^{2s}\prod_{\lambda_n\ne0}\left(1-\frac{k^2}{\lambda_n}\right),
\end{equation}
where $\gamma$ is constant, $s\ge0$ is the multiplicity of the zero eigenvalue, and $\{\lambda_n\}$ are the nonzero transmission eigenvalues.
Moreover, the function $D(k)$ satisfies (see \cite{TA1})
$$D(k)^*=D(-k^*),\quad \forall k\in \mathbb{C},$$
 where $k^*$ means the conjugate of $k$. Thus, the distribution of the zeros of function $D(k)$   is symmetrical with respect to both the real and imaginary axes.

\begin{prop}
The above function $D(k)$ can be expressed as
\begin{align}\label{p2p}
\notag D(k)=&\frac{\omega}{2}-\int_0^2K_x(0,t)\cos ktdt+\frac{h^2\omega}{2k^2}\\
&- \frac{h}{k}\int_0^2[K_t(0,t)+K_x(0,t)]\sin ktdt+\frac{h^2}{k^2}\int_0^2 K_t(0,t)\cos kt dt,
\end{align}
where $K(x,t)$ satisfies (\ref{xx}), and
\begin{equation}\label{3ss}
\omega:=\int_0^1 q(s)ds.
\end{equation}
\end{prop}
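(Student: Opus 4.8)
The plan is to substitute the Jost-solution representation \eqref{4s} into the definitions \eqref{5s} and \eqref{6s} of $F(k)$ and $D(k)$, and then convert the resulting Fourier-type integrals into the stated form by integration by parts, using the boundary values of $K$ recorded in \eqref{xx1}--\eqref{xx5}.

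First I would evaluate the Jost solution and its $x$-derivative at $x=0$. From \eqref{4s} one reads off $f(k,0)=1+\int_0^2 K(0,t)e^{ikt}\,dt$ directly. For $f'(k,0)$ the integral in \eqref{4s} has both limits depending on $x$, so Leibniz differentiation produces two boundary contributions, $-K(0,2)e^{2ik}$ and $-K(0,0)$, together with the bulk term $\int_0^2 K_x(0,t)e^{ikt}\,dt$. Here \eqref{xx1} kills $K(0,2)$ (since $x+t=2$), while the diagonal value $K(0,0)$ must be identified. This is the one preliminary fact I would need: summing the two formulas in \eqref{xx5} gives $\frac{d}{dx}K(x,x)=K_x(x,x)+K_t(x,x)=-\tfrac12 q(x)$, and integrating from $x$ to $1$ with $K(1,1)=0$ (again \eqref{xx1}) yields $K(x,x)=\tfrac12\int_x^1 q(s)\,ds$, hence $K(0,0)=\omega/2$ by \eqref{3ss}.

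Next I would assemble $F(k)$ from \eqref{5s} and form the even and odd combinations $F(k)\pm F(-k)$ appearing in \eqref{6s}. The exponentials collapse into $\cos kt$ and $\sin kt$: the even part carries a constant together with the $\cos kt$-integrals of $K_x(0,\cdot)$ and $K(0,\cdot)$, while the odd part carries $2k$ and the corresponding $\sin kt$-integrals. Dividing by $2i$ and by $-\tfrac{h}{2k}$ respectively and adding, the spurious constants cancel and one is left with $D(k)=\tfrac{\omega}{2}-\int_0^2 K_x(0,t)\cos kt\,dt+h\int_0^2 K(0,t)\cos kt\,dt-\tfrac{h}{k}\int_0^2 K_x(0,t)\sin kt\,dt+\tfrac{h^2}{k}\int_0^2 K(0,t)\sin kt\,dt$. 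The first two terms and the $K_x$-part of the $\sin kt$-integral already match the target \eqref{p2p}.

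The remaining step is to rewrite the two integrals still carrying $K(0,t)$ (rather than $K_t$) via integration by parts. For $h\int_0^2 K(0,t)\cos kt\,dt$ the boundary terms vanish ($K(0,2)=0$ and $\sin 0=0$), producing exactly $-\tfrac{h}{k}\int_0^2 K_t(0,t)\sin kt\,dt$, which combines with the existing $K_x$-term to give the bracketed factor $[K_t+K_x]$ in \eqref{p2p}. For $\tfrac{h^2}{k}\int_0^2 K(0,t)\sin kt\,dt$ the integration by parts leaves a nonzero endpoint contribution at $t=0$ equal to $K(0,0)/k=\omega/(2k)$, which generates both the $\tfrac{h^2\omega}{2k^2}$ term and the final $\tfrac{h^2}{k^2}\int_0^2 K_t(0,t)\cos kt\,dt$ term. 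Collecting everything reproduces \eqref{p2p}. The only genuinely delicate point is the bookkeeping of the Leibniz boundary terms and of the $t=0$ endpoint in the last integration by parts; identifying $K(0,0)=\omega/2$ is precisely what makes the constant $\tfrac{h^2\omega}{2k^2}$ appear, while the rest is routine trigonometric reduction.
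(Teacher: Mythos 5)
Your proposal is correct and follows essentially the same route as the paper: substitute the representation \eqref{4s} into \eqref{5s}--\eqref{6s}, form the even/odd combinations to reach the intermediate identity $D(k)=K(0,0)+\int_0^2[hK(0,t)-K_x(0,t)]\cos kt\,dt+\frac{h}{k}\int_0^2[hK(0,t)-K_x(0,t)]\sin kt\,dt$ (the paper's \eqref{12s}), and then integrate by parts using $K(0,2)=0$ from \eqref{xx1} and $K(0,0)=\omega/2$, with your boundary-term bookkeeping at $t=0$ matching the paper's. The only cosmetic difference is that you obtain $K(0,0)=\omega/2$ by summing \eqref{xx5} and integrating along the diagonal, whereas the paper gets $K(x,x)=\frac12\int_x^1 q(s)\,ds$ directly from \eqref{xx} with \eqref{xx1}; both are fine.
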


\begin{proof}
From  (\ref{6s}) we have
\begin{align}
\notag  D(k)=&-\frac{1}{2}[(f'(k,0)+f'(-k,0))-h(f(k,0)+f(-k,0))]\\
  &-\frac{h}{2ik}[[(f'(k,0)-f'(-k,0))-h(f(k,0)-f(-k,0))].\label{7s}
\end{align}
By virtue of (\ref{4s}) and (\ref{xx1}), we get
\begin{align}
 \notag f'(k,0)+f'(-k,0)=&ik-\!K(0,0)\!+\!\int_0^2 \!K_x(0,t)e^{ikt}dt\\
 \notag &\quad -\!ik\!-\!K(0,0)\!+\!\int_0^2\! K_x(0,t)e^{-ikt}dt\\
 =& -2K(0,0)+2\int_0^2K_x(0,t)\cos ktdt.\label{8s}
\end{align}
Similarly, we can also obtain
\begin{equation}\label{9s}
 f'(k,0)-f'(-k,0) =2ik+2i\int_0^2K_x(0,t)\sin ktdt,
\end{equation}
\begin{equation}\label{10s}
 f(k,0)+f(-k,0) =2+{2}\int_0^2K(0,t)\cos ktdt,
\end{equation}
 \begin{equation}\label{11s}
 f(k,0)-f(-k,0) ={2i}\int_0^2K(0,t)\sin ktdt.
\end{equation}
Substituting (\ref{8s})-(\ref{11s}) into (\ref{7s}), we get
\begin{align}\label{12s}
\notag D(k)\!=\!K(0,0)\!&+\!\int_0^2 \![hK(0,t)-K_x(0,t)]\cos kt dt\\
 &+\frac{h}{k}\int_0^2 [hK(0,t)-K_x(0,t)]\sin kt dt.
\end{align}
Then integrating by parts in (\ref{12s}), and using (\ref{xx}) and (\ref{xx1}), we arrive at (\ref{p2p}).
\end{proof}

If $q\in W_2^1[0,1]$, then $K_x(0,\cdot),K_t(0,\cdot)\in W_2^1[0,1]$. Integrating by parts in (\ref{p2p}), and with the help of (\ref{xx4}) and (\ref{xx5}), we have
\begin{align}\label{t2}
\!\!\!\notag D(k)=&\frac{\omega}{2}\!+\!\frac{q(1)\sin 2k}{4k}\!+\!\frac{1}{k}\int_0^2K_{xt}(0,t)\sin ktdt\!+\!\frac{h}{2k^2}[q(0)-q(1)\cos 2k+\omega h]\\
&-\frac{h}{k^2}\!\!\int_0^2[K_{tt}(0,t)+K_{xt}(0,t)]\cos ktdt+\frac{h^2}{k^2}\int_0^2 K_t(0,t)\cos kt dt.
\end{align}

If $q\in W_2^2[0,1]$, taking the derivative on both sides of (\ref{xx3}) with respect to $t$, and letting $x=0$, we have
\begin{equation}\label{t5}
  K_{xt}(0,t)=-\frac{1}{8}q'\left(\frac{t}{2}\right)+\frac{1}{2}\int_0^1q(s)[-K_t(s,s+t)-K_t(s,t-s)]ds,
\end{equation}
which implies from (\ref{xxs}) that
\begin{align}\label{t6}
 K_{xt}(0,2)=-\frac{1}{8}q'(1)+\frac{q(1)\omega}{8},
\end{align}
 and from (\ref{xx1}), (\ref{xx4}) and (\ref{xx5}) that
\begin{align}\label{t7}
 \notag K_{xt}(0,0)&=-\frac{1}{8}q'(0)+\frac{1}{2}\int_0^1q(s)\left[\frac{q(s)}{4}+\frac{q(0)}{4}-\frac{1}{8}\left(\int_s^1q(u)du\right)^2\right]ds\\
 &=-\frac{1}{8}q'(0)+\frac{1}{8}\int_0^1q^2(s)ds+\frac{q(0)\omega}{8}-\frac{\omega^3}{48}.
\end{align}
Integrating by parts in (\ref{t2}), and using (\ref{t6}) and (\ref{t7}), we have
\begin{align}\label{t8}
\notag D(k)=&\frac{\omega}{2}+\frac{q(1)\sin 2k}{4k}+\frac{Q_1\cos 2k}{8k^2}+\frac{Q_2}{8k^2}\\
&+\frac{1}{k^2}\int_0^2[K_{xtt}(0,t)-h(K_{tt}(0,t)+K_{xt}(0,t))+{h^2}K_t(0,t)]\cos kt dt,
\end{align}
where
\begin{equation}\label{t9}
 Q_1\!=\!q'(1)-q(1)\omega-4hq(1),\;Q_2\!=-q'(0)+\!\int_0^1\!q^2(s)ds+q(0)\omega-\frac{\omega^3}{6}+4h[q(0)+\omega h].
\end{equation}

To get the asymptotics of the non-real transmission eigenvalues, we introduce the following transcendental equation
\begin{equation}\label{k5}
  z-\kappa\log z=w,
\end{equation}
where $\kappa$ is a constant in $\mathbb{C}$ and $\log z=\log|z|+i\arg z$ with $-\pi<\arg z\le \pi$.
\begin{prop}\label{p3.1}
The transcendental equation (\ref{k5}) has a unique solution
\begin{equation}\label{k6}
  z(w)=w+\kappa\log w+\kappa^2\frac{\log w}{w}+O\left(\frac{\log^2 |w|}{|w|^2}\right)
\end{equation}
for any sufficiently large $|w|$.
\end{prop}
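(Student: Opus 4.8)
The plan is to recast (\ref{k5}) as a fixed-point equation and solve it by contraction, then extract the expansion (\ref{k6}) by bootstrapping. Rewriting (\ref{k5}) as $z=w+\kappa\log z=:T(z)$, I expect any solution with $|w|$ large to satisfy $z\sim w$, so I would look for a fixed point of $T$ in a small disk centred at $w$.

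For existence and uniqueness, fix a constant $C>|\kappa|$ and consider the closed disk $D_w=\{z:|z-w|\le C|\log w|\}$ for $|w|$ large (so that $z$ stays in the region where the chosen branch of $\log$ is analytic and $|z|\asymp|w|$). On $D_w$ one has $\log z=\log w+\log(1+(z-w)/w)=\log w+O(\log|w|/|w|)$, hence $|T(z)-w|=|\kappa|\,|\log z|\le|\kappa|(|\log w|+o(|\log w|))\le C|\log w|$, so $T(D_w)\subset D_w$. Moreover $T'(z)=\kappa/z$, and since $|z|\asymp|w|$ on $D_w$ we get $|T'(z)|\le 2|\kappa|/|w|<1$ for $|w|$ large; thus $T$ is a contraction and the Banach fixed-point theorem yields a unique solution $z(w)\in D_w$. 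Uniqueness among all solutions close to $w$ can alternatively be read off from Rouch\'e's theorem applied to $z-\kappa\log z-w$ versus $z-w$ on $\partial D_w$, since $|\kappa\log z|<|z-w|$ there.

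To obtain (\ref{k6}) I would bootstrap the relation $z=w+\kappa\log z$. Membership in $D_w$ gives the crude bound $z=w+O(\log|w|)$. Substituting once, $z=w+\kappa\log\!\big(w+O(\log|w|)\big)=w+\kappa\log w+O(\log|w|/|w|)$. Substituting this sharper estimate back, write $z=w+\kappa\log w+r$ with $r=O(\log|w|/|w|)$, so that $\kappa\log z=\kappa\log w+\kappa\log\!\big(1+\tfrac{\kappa\log w}{w}+\tfrac{r}{w}\big)$; expanding $\log(1+\xi)=\xi+O(\xi^2)$ with $\xi=\tfrac{\kappa\log w}{w}+O(\log|w|/|w|^2)$ gives $\kappa\log z=\kappa\log w+\kappa^2\tfrac{\log w}{w}+O(\log^2|w|/|w|^2)$, which is precisely (\ref{k6}).

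The routine part is the algebra of the successive substitutions; the only points requiring care are the choice of branch of $\log z$ and the domain on which it is analytic (one must ensure that for the relevant range of $w$ the iterates never approach the cut), together with the verification that each substitution improves the error exactly to the next order, so that the final remainder is genuinely $O(\log^2|w|/|w|^2)$ rather than something larger. These are the main, though modest, obstacles.
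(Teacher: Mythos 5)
Your proposal is correct, and for the asymptotic expansion itself it is essentially the paper's own computation: the paper substitutes $z=w(1+\xi)$, so that (\ref{k5}) becomes $\xi=\kappa\left(\frac{\log w}{w}+\frac{\log(1+\xi)}{w}\right)$, and then iterates using $\log(1+\xi)=\xi+O(\xi^2)$ --- algebraically identical to your successive substitutions in $z=w+\kappa\log z$, with $\xi=(z-w)/w$. Where you genuinely differ is on existence and uniqueness: the paper simply cites the appendix of \cite{XY2} (see also \cite{MF}) both for uniqueness and for the fixed-point setup, whereas you give a self-contained argument via the Banach fixed-point theorem (or Rouch\'e) on the disk $|z-w|\le C|\log w|$ with $C>|\kappa|$; the invariance and contraction estimates you sketch ($|T'(z)|\le 2|\kappa|/|w|$ since $|z|\asymp|w|$ on the disk) are sound. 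This buys independence from the cited reference, at the price of the branch-cut issue you yourself flag, and that caveat is not cosmetic: with the principal branch $-\pi<\arg z\le\pi$, the map $T(z)=w+\kappa\log z$ is discontinuous across $(-\infty,0]$, the contraction estimate via $T'$ fails there, and in fact for $w$ large negative real and $\kappa$ real the equation has two solutions $z\approx w+\kappa(\log|w|\pm i\pi)$ straddling the cut, so uniqueness as literally stated needs $w$ bounded away from the negative real axis (e.g.\ $|\arg w|\le\pi-\delta$, or $\mathrm{dist}(w,(-\infty,0])$ large compared with $\log|w|$). This restriction is implicit in the paper's appeal to \cite{XY2} and is harmless in the application, where $w_n=-2n\pi i+O(1)$ stays near the imaginary axis; with that proviso made explicit, your contraction argument closes a step the paper leaves to a citation, and your bootstrap delivers the remainder $O\left(\frac{\log^2|w|}{|w|^2}\right)$ exactly as in (\ref{k6}).
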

\begin{proof}
The uniqueness has been proved in \cite{XY2} (see also \cite{MF}). Let us show (\ref{k6}). By the Appendix  in \cite{XY2}, we see that the solution $z$ of (\ref{k5}) satisfies
\begin{equation}\label{ap2}
  z=w(1+\xi),
\end{equation}
where $\xi$ satisfies the equation
\begin{equation}\label{ap3}
  \xi=\kappa\left(\frac{\log w}{w}+\frac{\log(1+\xi)}{w}\right).
\end{equation}
Note that
\begin{equation}\label{ap4}
  \log(1+\xi)=\xi+O(\xi^2),\quad \xi\to0\;\;(i.e., |w|\to\infty).
\end{equation}
Substituting (\ref{ap3}) and (\ref{ap4}) into (\ref{ap2}), we obtain (\ref{k6}).
\end{proof}

\section{Asymptotics of zeros of the leading function}
In this section, we study the asymptotics of the zeros of the function $g_1(k)$ defined in (\ref{3.1}) below.   In this and the next sections, when we count zeros of a function, we always count the multiple zeros  (if there exist) with their multiplicities.

 Denote
\begin{equation}\label{3.1}
g_1(k):=8ik\left(\frac{\omega}{2}+\frac{q(1)\sin2k}{4k}\right)= 4ik\omega+q(1)(e^{2ik}-e^{-2ik}).
\end{equation}
It is obvious that if $\omega=0$ then the zeros of $g_1(k)$ are $\{\frac{n\pi}{2}\}_{n\in \mathbb{Z}}$.
If $\omega\ne0$ then there are infinitely many non-real zeros of $g_1(k)$.

\begin{lemma}\label{lemma3.0}
If $q(1)\ne0$ then all zeros of $g_1(k)$, excepting at most two real or imaginary ones, are simple algebraically.
\end{lemma}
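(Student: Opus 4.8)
The plan is to reduce the assertion to a statement about the common zeros of a single trigonometric function and its derivative. First I would use $e^{2ik}-e^{-2ik}=2i\sin 2k$ to rewrite
\[
g_1(k)=2i\bigl(2\omega k+q(1)\sin 2k\bigr)=:2i\,h(k).
\]
Since the constant factor $2i$ never vanishes, $g_1$ and $h$ have exactly the same zeros with the same algebraic multiplicities, so a zero $k_0$ of $g_1$ is non-simple precisely when $h(k_0)=0$ and $h'(k_0)=0$, where $h'(k)=2\omega+2q(1)\cos 2k$. Thus the whole problem becomes: show that the system $h(k_0)=h'(k_0)=0$ has at most two solutions, all lying on the real or imaginary axis.

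The case $\omega=0$ is immediate: then $h(k)=q(1)\sin 2k$, whose zeros are $\{n\pi/2\}_{n\in\mathbb Z}$, and since $q(1)\ne0$ each is a simple zero of $\sin 2k$; hence there are no non-simple zeros at all and the claim holds with zero exceptions.

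Assume now $\omega\ne0$. From $h'(k_0)=0$ I get $\cos 2k_0=-\omega/q(1)$, and from $h(k_0)=0$ I get $\sin 2k_0=-2\omega k_0/q(1)$; here $q(1)\ne0$ is exactly what makes these well posed. The decisive step is to eliminate the transcendental terms by the Pythagorean identity $\cos^2 2k_0+\sin^2 2k_0=1$, which yields
\[
\frac{\omega^2}{q(1)^2}\bigl(1+4k_0^2\bigr)=1,\qquad\text{i.e.}\qquad k_0^2=\frac14\!\left(\frac{q(1)^2}{\omega^2}-1\right)=:c.
\]
Since $c$ is a fixed real number, any non-simple zero must satisfy $k_0\in\{\sqrt c,-\sqrt c\}$: there are at most two such points, and they are real when $c\ge0$ and purely imaginary when $c<0$. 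This proves the lemma.

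I expect the only real content — and the step I would flag as the crux — to be the Pythagorean elimination turning the two transcendental equations into the single algebraic constraint $k_0^2=c$; everything else is bookkeeping. A couple of minor points I would make sure to address: the derivation above is only a necessary condition for $k_0$ to be non-simple, so possibly one or both of $\pm\sqrt c$ are in fact simple zeros or not zeros at all, but this only improves the bound; and if a zero had order $\ge3$ it would still lie among these at most two points (indeed $h''(k_0)=-4q(1)\sin 2k_0=8\omega k_0$, forcing $k_0=0$, which is consistent with $c=0$), so no separate treatment of higher multiplicities is needed.
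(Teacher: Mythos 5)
Your proposal is correct and follows essentially the same route as the paper: both derive the system $\sin 2k_0=-2\omega k_0/q(1)$, $\cos 2k_0=-\omega/q(1)$ from the simultaneous vanishing of $g_1$ and $g_1'$, eliminate via $\sin^2+\cos^2=1$ to obtain $\frac{\omega^2}{q(1)^2}(1+4k_0^2)=1$, and conclude that $k_0^2$ is real, so any multiple zero lies among the at most two points $\pm\sqrt{c}$ on the real or imaginary axis. Your uniform treatment of $k_0=0$ and the remark on zeros of order $\ge 3$ are harmless refinements of the same argument.
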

\begin{proof}
If $\omega=0$ it is obvious. Assume $\omega\ne0$, and let $\mu_0$ be some zero of $g_1$ with multiplicities at lest two. It is obvious that  $\mu_0=0$ if and only if $\omega=-q(1)$.  Assume $\mu_0=\sigma_0+i\tau_0$ with $\sigma_0^2+\tau_0^2\ne0$. Then we have from (\ref{3.1}) that
  \begin{equation}\label{zp0}
    \sin(2\mu_0)=-\frac{2\omega \mu_0}{q(1)},\quad    \cos(2\mu_0)=-\frac{\omega}{q(1)},
  \end{equation}
which implies
\begin{equation}\label{zp1}
  \frac{\omega^2}{q(1)^2}(1+4\mu_0^2)=1.
\end{equation}
It follows that $\mu_0^2\in\mathbb{R}$, and so either $\sigma_0=0$ or $\tau=0$. If $\tau=0$ then
\begin{equation}\label{zp2}
  \mu_0=\sigma_0=\frac{\pm\sqrt{\frac{q(1)^2}{\omega^2}-1}}{2}\quad \text{and}\quad \left|\frac{\omega}{q(1)}\right|<1.
\end{equation}
If $\sigma_0=0$ then
\begin{equation}\label{zp3}
  \mu_0=i\tau_0=\frac{\pm i\sqrt{1-\frac{q(1)^2}{\omega^2}}}{2}\quad \text{and}\quad \frac{\omega}{q(1)}<-1.
\end{equation}
The equations (\ref{zp0})-(\ref{zp2}) or (\ref{zp0}), (\ref{zp1}) and (\ref{zp3}) are overdetermined, which may hold only for certain $\frac{\omega}{q(1)}$.  If $\frac{\omega}{q(1)}>1$, then all zeros of $g_1(k)$ are simple. The proof is complete.
\end{proof}
Now, let us study the asymptotics of the zeros of $g_1(k)$. Let $z:=-2ik$ in  (\ref{3.1}) with ${\rm Im}k\ge0$. Then we have that $e^{2ik}g_1(k)=0$ for ${\rm Im}k>0$ is equivalent to that $$ze^{-z}=-\frac{q(1)}{2w}(1-e^{-2z}),$$
which implies
\begin{equation}\label{ap5}
z-\log z=w_n,\quad w_n:=-2n\pi i-\log \left(-\frac{q(1)}{2\omega}\right)-\log \left(1-e^{-2z}\right).
\end{equation}
By Prop. \ref{p3.1} we have
\begin{equation}\label{ap6}
  z=w_n+\log w_n+\frac{\log w_n}{w_n} +O\left(\frac{\log^2 |w_n|}{|w_n|^2}\right),\quad n\to\pm\infty.
\end{equation}
It follows that $ {\rm Re}z= \log n+O(1)$ which implies
\begin{equation}\label{ap5s0}
 \log \left(1-e^{-2z}\right)=O\left(\frac{1}{n^2}\right).
\end{equation}
Going back to the second equation in (\ref{ap5}), we have
\begin{align}\label{ap5s1}
\notag \log w_n=&\log \left[(-2n\pi i)\left(1+\frac{\log \left(-\frac{q(1)}{2\omega}\right)}{2n\pi i}+O\left(\frac{1}{n^3}\right)\right)\right]\\
=&\log (-2n\pi i)+\frac{\log \left(-\frac{q(1)}{2\omega}\right)}{2n\pi i}+O\left(\frac{1}{n^2}\right),
  \end{align}
  and
  \begin{equation}\label{ap5s}
    \frac{\log w_n}{w_n}=\frac{\log (-2n\pi i)}{-2n\pi i}+O\left(\frac{\log n}{n^2}\right).
  \end{equation}
Therefore, substituting (\ref{ap5s0})- (\ref{ap5s}) into  (\ref{ap5}) we have
\begin{equation}\label{ap7}
 \!z\!=\!-2n\pi i +\log (-2n\pi i)-\log \left(-\frac{q(1)}{2\omega}\right)+\frac{\log (-2n\pi i)}{-2n\pi i}+\frac{\log \left(-\frac{q(1)}{2\omega}\right)}{2n\pi i}+O\left(\frac{\log^2 n}{n^2}\right),
\end{equation}
which implies from $z=-2ik$ that the zeros of $g_1(k)$ in $\mathbb{C}_+:=\{k\in \mathbb{C}:{\rm Im}k>0\}$, denoted by $\{\mu_n\}$, have the following asymptotics
\begin{equation}\label{ap8}
 \mu_n=n\pi+\frac{ib_n}{2}-\frac{b_n}{4n\pi}+O\left(\frac{\log^2 n}{n^2}\right),\quad  n\to\pm\infty,
\end{equation}
where
\begin{equation}\label{ap9}
b_n=\log (-2n\pi i)-\log \left(-\frac{q(1)}{2\omega}\right).
\end{equation}
Since the distribution of the zeros of function $g_1(k)$   is symmetrical with respect to the real and imaginary axes.
 Let us consider the zeros of $g_1(k)$ in the
first quadrant, namely, assume $n>0$. In this case, we have
\begin{equation}\label{ap10}
  b_n=\left\{\begin{split}
            & \log (2n\pi)-\log \left(-\frac{q(1)}{2\omega}\right)-\frac{\pi i}{2},\quad \text{if}\quad \frac{q(1)}{\omega}<0,\\
            &\log (2n\pi)-\log \left(\frac{q(1)}{2\omega}\right)-\frac{3\pi i}{2},\quad \text{if}\quad \frac{q(1)}{\omega}>0.
             \end{split}\right.
\end{equation}
Let $\mu_n:=\sigma_n+i\tau_n$ with $\sigma_n>0$ and $\tau_n>0$. With the help of (\ref{ap8}) and (\ref{ap10}), we have that if  $\frac{q(1)}{\omega}<0$ then
\begin{equation}\label{ap11}
  \left\{\begin{split}
            \sigma_n=&\left(n+\frac{1}{4}\right)\pi -\frac{ \log (2n\pi)-\log \left(-\frac{q(1)}{2\omega}\right)}{4n\pi}+O\left(\frac{\log^2 n}{n^2}\right),\\
            \tau_n=&\frac{1}{2}\left[\log (2n\pi)-\log \left(-\frac{q(1)}{2\omega}\right)+\frac{1}{4n}\right]+O\left(\frac{\log^2 n}{n^2}\right);
             \end{split}\right.
\end{equation}
if  $\frac{q(1)}{\omega}>0$ then
\begin{equation}\label{ap12}
  \left\{\begin{split}
            \sigma_n=&\left(n+\frac{3}{4}\right)\pi -\frac{ \log (2n\pi)-\log \left(\frac{q(1)}{2\omega}\right)}{4n\pi}+O\left(\frac{\log^2 n}{n^2}\right),\\
            \tau_n=&\frac{1}{2}\left[\log (2n\pi)-\log \left(\frac{q(1)}{2\omega}\right)+\frac{3}{4n}\right]+O\left(\frac{\log^2 n}{n^2}\right).
             \end{split}\right.
\end{equation}

Let $k=\sigma+i\tau$. Rewrite (\ref{3.1}    ) as
\begin{align}\label{qoa}
 g_1(k)=&-4\omega\tau+q(1)(e^{-2\tau}-e^{2\tau})\cos 2\sigma +i[4\omega\sigma+q(1)(e^{-2\tau}+e^{2\tau})\sin 2\sigma].
\end{align}
For sufficiently large $n\in \mathbb{N}$, consider the rectangle contour $\Gamma_n=I_1\cup I_2\cup I_3 \cup I_4$ (see Figure 1), where
\begin{equation*}
\begin{split}
I_1:&=\left\{k:\sigma=\left(n+1\right)\pi,|\tau|\leq\left(n+1\right)\pi\right\},\;\; I_2:=\left\{k:|\sigma|\le\left(n+1\right)\pi=\tau\right\},\\
I_3:&=\left\{k:-\sigma=\left(n+1\right)\pi\ge|\tau|\right\},\;\;\; I_4:=\left\{k:|\sigma|\le\left(n+1\right)\pi=-\tau\right\}.\\
\end{split}
\end{equation*}
\begin{center}
\setlength{\unitlength}{1mm}
\begin{picture}(60,90)(-30,-45)
\put(0,0){\vector(1,0){55}}
\put(0,0){\line(-1,0){55}}
\put(0,0){\line(0,-1){39}}
\put(0,0){\vector(0,1){43}}
\put(0,30){\line(-1,0){30}}\put(0,30){\line(1,0){30}}
\put(30,0){\line(0,1){30}}\put(30,0){\line(0,-1){30}}
\put(0,-30){\line(-1,0){30}}\put(0,-30){\line(1,0){30}}
\put(-30,0){\line(0,1){30}}\put(-30,0){\line(0,-1){30}}
\put(31,-4){$(n+1)\pi$} \put(30,0){\circle*{1}}
\put(-49,-4){$-(n+1)\pi$} \put(-30,0){\circle*{1}}
\put(1,32){$(n+1)\pi i$} \put(0,30){\circle*{1}}
\put(1,-35){$-(n+1)\pi i$} \put(0,-30){\circle*{1}}
\put(30,2){\vector(0,1){10}}
\put(-30,-2){\vector(0,-1){10}}
\put(2,-30){\vector(1,0){10}}
\put(2,30){\vector(-1,0){10}}
\put(32,5){$I_1$}
\put(-12,32){$I_2$}
\put(-36,4){$I_3$}
\put(-14,-34){$I_4$}
\put(58,-1){$\sigma$}
\put(2,40){$i\tau$}
 \put(-3,-4){0}\put(0,0){\circle*{1}}
\put(-20,-47){Figure 1. The contour $\Gamma_n$}
\end{picture}
\end{center}
\vspace{5mm}
Letting $k$ start from the point $((n+1)\pi,-(n+1)\pi i)$ and travel round $\Gamma_n$ by the counter-clockwise, and using (\ref{qoa}), one can easily obtain that if  $\frac{q(1)}{\omega}<0$ then
 the variations
  \begin{equation*}
  \begin{split}
   \Delta_{I_1}\!\arg\frac{ g_1(k)}{\omega}\!= -\theta_1,\quad \Delta_{I_2}\!\arg \frac{ g_1(k)}{\omega}\!=(4n+4)\pi-\theta_2,\\
   \!\Delta_{I_3}\!\arg \frac{ g_1(k)}{\omega}=-\theta_3,\quad  \Delta_{I_4}\!\arg \frac{ g_1(k)}{\omega}=(4n+4)\pi-\theta_4;
  \end{split}
 \end{equation*}
 if $\frac{q(1)}{\omega}>0$ then the variations
   \begin{equation*}
  \begin{split}
   \Delta_{I_1}\!\arg \frac{ g_1(k)}{\omega}\!= \theta_1,\quad \Delta_{I_2}\!\arg \frac{ g_1(k)}{\omega}\!=(4n+4)\pi+\theta_2,\\
   \!\Delta_{I_3}\!\arg \frac{ g_1(k)}{\omega}=\theta_3,\quad  \Delta_{I_4}\!\arg \frac{ g_1(k)}{\omega}=(4n+4)\pi+\theta_4;
  \end{split}
 \end{equation*}
where $\theta_i\in(0,\pi)$ and
 $\theta_1+\theta_2+\theta_3+\theta_4=2\pi$. Thus $\Delta_{\Gamma_n}\!\arg \frac{ g_1(k)}{\omega}=(8n+6)\pi$ if $\frac{q(1)}{\omega}<0$; $\Delta_{\Gamma_n}\!\arg \frac{ g_1(k)}{\omega}=(8n+10)\pi$ if $\frac{q(1)}{\omega}>0$ .
By the argument principal of entire functions, the number of zeros of the function $g_1(k)$ inside $\Gamma_n$ equals $4n+3$ if $\frac{q(1)}{\omega}<0$; equals $4n+5$ if $\frac{q(1)}{\omega}>0$. Therefore, the zeros of $g_1(k)$ can be numbered as follows.
\begin{lemma}\label{lemma3.1}
If $\frac{q(1)}{\omega}<0$, then the zeros of $\frac{g_1(k)}{k}$, denoted by $\{\pm\mu_n\}_{n\ge0}\cup \{\pm\mu_n^*\}_{n\ge1}$ with ${\rm Im}\mu_n\ge0$ and ${\rm Re}\mu_n\ge0$, satisfy the asymptotic formula (\ref{ap11}). If $\frac{q(1)}{\omega}>0$, then the zeros of $\frac{g_1(k)}{k}$, denoted by $\{\pm\mu_n\}_{n\ge0}\cup \{\pm\mu_n^*\}_{n\ge0}$ with ${\rm Im}\mu_n\ge0$ and ${\rm Re}\mu_n\ge0$, satisfy the asymptotic formula (\ref{ap12}).
\end{lemma}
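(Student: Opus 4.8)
The plan is to combine the three facts already assembled---the symmetry of the zero set, the first-quadrant asymptotics (\ref{ap11}) and (\ref{ap12}), and the argument-principle count on the rectangles $\Gamma_n$---into a single counting argument that simultaneously fixes the labeling and proves it exhausts all zeros. First I would record the symmetry structure: since $g_1(-k)=-g_1(k)$ and, because $\omega,q(1)\in\mathbb{R}$, also $g_1(k)^*=g_1(-k^*)$, the zero set of $g_1$ is invariant under reflection in both coordinate axes. After removing the zero at the origin, which is simple whenever $\omega+q(1)\ne0$, the zeros of $g_1(k)/k$ therefore fall into quadruples $\{\pm\mu,\pm\mu^*\}$ lying in the open quadrants and symmetric pairs $\{\pm\mu\}$ lying on one of the axes; consequently the number of off-axis zeros is a multiple of $4$ and the number of nonzero on-axis zeros is even.

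Next I would invoke the variation of argument already computed: by the argument principle $g_1$ has exactly $4n+3$ zeros inside $\Gamma_n$ when $q(1)/\omega<0$ and $4n+5$ when $q(1)/\omega>0$, counted with multiplicity, so after discarding the origin $g_1(k)/k$ has $4n+2$ resp.\ $4n+4$ zeros inside $\Gamma_n$. The parity of these totals is decisive: since the off-axis zeros contribute a multiple of $4$, each total is congruent modulo $4$ to twice the number of symmetric axis pairs, so $4n+2\equiv2\pmod4$ forces an \emph{odd} number of axis pairs while $4n+4\equiv0\pmod4$ permits only an \emph{even} number. Since (\ref{ap11}) and (\ref{ap12}) place all but finitely many zeros in the open quadrants with ${\rm Re}\,\mu_n\sim(n+\tfrac14)\pi$ resp.\ $(n+\tfrac34)\pi$ and ${\rm Im}\,\mu_n\to+\infty$, and since Lemma \ref{lemma3.0} makes all but at most two zeros simple, the distinct-point count and the with-multiplicity count agree for all large indices, and the axis pairs can occur only among the finitely many small zeros.

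I would then match the two bookkeepings term by term. For each large $n$, Prop.\ \ref{p3.1} furnishes a unique zero $\mu_j$ in the open first quadrant obeying (\ref{ap11}) or (\ref{ap12}); by their real parts, precisely those with index at most $n$ lie inside $\Gamma_n$, and together with their three reflections these are mutually distinct and all interior to $\Gamma_n$. In the case $q(1)/\omega<0$ this quadruple family, supplemented by a single axis pair $\{\pm\mu_0\}$---for which $\mu_0^*$ coincides with $\pm\mu_0$, explaining why the starred list runs only over $n\ge1$---produces exactly $4n+2$ distinct zeros; in the case $q(1)/\omega>0$ the quadruples alone, with $\mu_0$ already off the axes, produce exactly $4n+4$. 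As each of these equals the argument-principle total, no further zeros lie inside $\Gamma_n$; letting $n\to\infty$ then shows that the family $\{\pm\mu_n\}\cup\{\pm\mu_n^*\}$, with the stated index ranges, exhausts the zeros of $g_1(k)/k$, which is the assertion.

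The main obstacle I anticipate is the genuine gap between the asymptotic derivation and completeness. Formula (\ref{ap8}) was obtained only for ${\rm Im}\,k>0$, where $\log(1-e^{-2z})=O(n^{-2})$; by itself it cannot exclude zeros accumulating on or near the real axis, nor can it alone distinguish one axis pair from three (resp.\ zero from two). The exact count on $\Gamma_n$ closes most of this gap through its parity, but pinning the surplus down to \emph{exactly} one axis pair in the first case and to \emph{none} in the second---equivalently, establishing the precise starting indices of the starred family---requires a direct examination of the finitely many small and axis zeros. I would carry this out by solving $g_1$ on the axes explicitly, i.e.\ studying the real roots of $\sin2k=-\tfrac{2\omega}{q(1)}k$ and the imaginary roots of $\sinh2\tau=-\tfrac{2\omega}{q(1)}\tau$ for the two signs of $q(1)/\omega$, and reconciling the outcome with the sign conditions already isolated in the proof of Lemma \ref{lemma3.0}.
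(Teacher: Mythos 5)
Your proposal follows the paper's own route exactly: first-quadrant asymptotics for the zeros via the transcendental equation of Prop.~\ref{p3.1}, the two-axis symmetry of the zero set of $g_1$, and the argument-principle counts $4n+3$ (for $\frac{q(1)}{\omega}<0$) resp.\ $4n+5$ (for $\frac{q(1)}{\omega}>0$) inside $\Gamma_n$, matched against the enumeration to fix the index ranges. Your mod-$4$ parity argument and the proposed explicit analysis of the axis equations $\sin 2\sigma=-\frac{2\omega\sigma}{q(1)}$ and $\sinh 2\tau=-\frac{2\omega\tau}{q(1)}$ merely make explicit the small-zero bookkeeping that the paper settles implicitly by matching totals, so this is a more careful write-up of the same argument rather than a different method.
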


\section{Asymptotics of the transmission eigenvalues}

In this section, we study the asymptotics of the transmission eigenvalues under the assumption $q\in W_2^1[0,1]$ or $q\in W_2^2[0,1]$. In the latter case, we obtain the more accurate estimate for the asymptotics of the transmission eigenvalues.

Using (\ref{t2}) and (\ref{3.1}), and taking (\ref{12s}) into account, we can rewrite (\ref{t2}) as
\begin{equation}\label{uo1}
8ikD(k)=g_1(k)+\int_{0}^2 K(t)\sin ktdt,
\end{equation}
where $K(t)$ is some function in $L^2(0,2)$.
For arbitrary small $\varepsilon>0$ and large $n$, consider the rectangle contour $\gamma_n:=\gamma_{\sigma_n}^\pm\cup\gamma_{\tau_n}^\pm$, where
  \begin{equation*}
    \gamma_{\sigma_n}^{\pm}:=\{k:\sigma=\sigma_n\pm\varepsilon,|\tau-\tau_n|\leq\varepsilon\},\quad
    \gamma_{\tau_n}^{\pm}:=\{k:\tau=\tau_n\pm\varepsilon,|\sigma-\sigma_n|\leq\varepsilon\}.
  \end{equation*}
  \begin{lemma}\label{lemma3.2}
If $q(1)\ne0$, then    for sufficiently large $n>0$ and small $\varepsilon>0$, the function $g_1(k)$ satisfies
    \begin{equation}\label{ap15}
     | g_1(k)|\ge C_\varepsilon e^{2|{\rm Im}k|},\quad \forall k\in \Gamma_n\cup\gamma_n,
    \end{equation}
    where $C_\varepsilon>0$ depends only on $\varepsilon$.
  \end{lemma}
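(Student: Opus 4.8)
The plan is to prove the lower bound separately on the large rectangle $\Gamma_n$ and on the small squares $\gamma_n$, since the two regimes call for genuinely different arguments. On $\Gamma_n$ I would work directly from the real/imaginary splitting (\ref{qoa}); on $\gamma_n$, where $g_1$ sits close to its (simple) zero $\mu_n$, the polynomial term $4i\omega k$ and the exponential term $q(1)e^{-2ik}$ are of the \emph{same} order, so neither can be treated as a perturbation of the other, and a factorization argument is needed instead.

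On the horizontal edges $I_2,I_4$ of $\Gamma_n$ one has $|\tau|=(n+1)\pi$, and writing $T:=e^{2|\tau|}$, formula (\ref{qoa}) shows that the real and imaginary parts of $g_1$ equal $\mp q(1)T\cos 2\sigma$ and $q(1)T\sin 2\sigma$ up to a remainder that grows only polynomially in $n$ (it is $O(|\omega|n)+O(1/T)$). Since $\cos^2 2\sigma+\sin^2 2\sigma=1$, the leading vector has modulus exactly $|q(1)|T$, so $|g_1(k)|\ge |q(1)|T-\sqrt2\,(\text{remainder})\ge \tfrac12|q(1)|e^{2|\tau|}$ once $n$ is large enough that the exponential dominates the polynomial. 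On the vertical edges $I_1,I_3$ one has $\sin 2\sigma=0,\ \cos 2\sigma=1$, so $\mathrm{Im}\,g_1=4\omega\sigma=\pm4\omega(n+1)\pi$ and $\mathrm{Re}\,g_1=-4\omega\tau+q(1)(e^{-2\tau}-e^{2\tau})$: for $|\tau|$ bounded I use $|g_1|\ge|\mathrm{Im}\,g_1|=4|\omega|(n+1)\pi$, which beats $Ce^{2|\tau|}$ for large $n$, while for $|\tau|$ large the real part again gives $|g_1|\ge\tfrac12|q(1)|e^{2|\tau|}$. Combining yields $|g_1(k)|\ge Ce^{2|\tau|}$ on $\Gamma_n$ with $C$ independent of $n$ and of $\varepsilon$. (Here $\omega\neq0$ is used on the vertical edges; when $\omega=0$ all zeros of $g_1$ are the real points $n\pi/2$, and one shifts the corners of $\Gamma_n$ to half-integer multiples of $\pi$, after which the same estimates apply.)

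The main work is on $\gamma_n$. Writing $g_1(k)=-q(1)e^{-2ik}\phi(k)$ with $\phi(k):=1-e^{4ik}-\tfrac{4i\omega}{q(1)}k e^{2ik}$ (an entire function with the same zeros as $g_1$), and noting $|e^{-2ik}|=e^{2\tau}=e^{2|\tau|}$ because $\tau>0$ near $\mu_n$, the claim reduces to a lower bound $|\phi(k)|\ge c_\varepsilon$ on $\gamma_n$. The point is that $\phi$ enjoys bounds uniform in $n$: from (\ref{ap11})–(\ref{ap12}) one has $\tau_n=\tfrac12\log\!\big(4n\pi|\omega/q(1)|\big)+o(1)$, whence $|k|\,|e^{2ik}|=|k|e^{-2\tau}$ stays bounded on the square $S_n$ enclosed by $\gamma_n$, so $\phi,\phi',\phi''$ are all bounded there uniformly in $n$, and a direct computation gives $|\phi'(\mu_n)|\to2$. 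Since $\mu_n$ is a simple zero (Lemma \ref{lemma3.0}) and, by the spacing in (\ref{ap11})–(\ref{ap12}), the only zero of $\phi$ in $S_n$ for large $n$, Taylor's theorem with a uniformly bounded remainder gives $\phi(k)=\phi'(\mu_n)(k-\mu_n)+O(|k-\mu_n|^2)$; as $|k-\mu_n|\ge\varepsilon$ on $\gamma_n$, this yields $|\phi(k)|\ge\varepsilon(|\phi'(\mu_n)|-C\varepsilon)\ge\varepsilon/2$ for $\varepsilon$ small and $n$ large. Hence $|g_1(k)|\ge\tfrac12|q(1)|\varepsilon\,e^{2|\tau|}$ on $\gamma_n$, and taking $C_\varepsilon:=\tfrac12|q(1)|\varepsilon$ (for $\varepsilon\le1$) completes the estimate on $\Gamma_n\cup\gamma_n$.

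I expect the main obstacle to be exactly this uniformity on $\gamma_n$: because $\mu_n$ is a zero, the two summands of $g_1$ cancel to leading order there, so the bound cannot be read off from either term alone. Extracting the factor $e^{-2ik}$ converts the problem into controlling the entire function $\phi$ near a simple zero, and the crux is that the precise logarithmic growth of $\tau_n$ makes $|k|\,|e^{2ik}|$ — and therefore $\phi$ together with its first two derivatives — bounded independently of $n$, which is what lets a single constant $c_\varepsilon$ serve for all large $n$.
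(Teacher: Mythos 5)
Your proof is correct in its essentials, but on the key part it takes a genuinely different route from the paper. The paper also splits into $\Gamma_n$ and $\gamma_n$, but on both pieces it works with the single normalized quantity $G_1(k):=|g_1(k)|^2e^{-4|{\rm Im}k|}$, computed explicitly from (\ref{qoa}): on each edge of the small square $\gamma_n$ it passes to the limit $n\to\infty$ using $\sigma_ne^{-2\tau_n}\to|q(1)/(4\omega)|$ and minimizes the resulting trigonometric profiles $p_1(t)=1+e^{\mp4\varepsilon}-2e^{\mp2\varepsilon}\cos 2t$ and $p_2(t)=1+e^{-4t}-2e^{-2t}\cos2\varepsilon$ by hand, obtaining the explicit lower bounds $q(1)^2(1-e^{\mp2\varepsilon})^2$ and $q(1)^2\sin^22\varepsilon$; on $\Gamma_n$ it splits the vertical edge at height $\frac14\log n$, exactly where your fixed threshold $\tau_0$ plays the same role. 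Your treatment of $\gamma_n$ replaces this computation by a structural argument: factoring $g_1(k)=-q(1)e^{-2ik}\phi(k)$ and running Taylor's theorem at the simple zero $\mu_n$ (Lemma \ref{lemma3.0}), with uniformity in $n$ coming from the boundedness of $|k|e^{-2\tau}$ on the squares — a consequence of the logarithmic growth of $\tau_n$ in (\ref{ap11})--(\ref{ap12}) — and from $|\phi'(\mu_n)|\to2$ (which your relation $\frac{4i\omega}{q(1)}\mu_ne^{2i\mu_n}=1+O(n^{-2})$ at the zero justifies). This is cleaner and more conceptual: it explains \emph{why} the bound is uniform near the zeros, needs no edge-by-edge case analysis, and does not even require knowing that $\mu_n$ is the only zero in the square (the Taylor lower bound $|\phi(k)|\ge|\phi'(\mu_n)||k-\mu_n|-C|k-\mu_n|^2$ is unconditional), whereas the paper's computation buys explicit asymptotic constants. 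Your $\Gamma_n$ analysis is essentially the paper's in Re/Im form rather than $G_1$ form.

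One genuine, though small and localized, error: your parenthetical fix for $\omega=0$. The zeros of $g_1(k)=2iq(1)\sin 2k$ are \emph{all} the points $n\pi/2$, so half-integer multiples of $\pi$ are still zeros, and shifting the corners of $\Gamma_n$ to $\sigma=(n+\frac12)\pi$ leaves the contour passing through a zero at $\tau=0$ (there $\sin2\sigma=0$ and ${\rm Re}\,g_1$ vanishes with $\omega=0$, so both parts of your vertical-edge argument fail). The correct shift is to odd multiples of $\pi/4$, e.g.\ $\sigma=(n+\frac34)\pi$, where $|\sin2\sigma|=1$ and $|\sin 2k|^2=\sin^22\sigma+\sinh^22\tau\ge\frac14e^{4|\tau|-c}$ gives the bound directly. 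The paper sidesteps this case entirely by citing \cite{FY1} for $\omega=0$ and assuming $\omega\ne0$ throughout its proof, which is a legitimate alternative to patching your contour.
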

\begin{proof}
For $\omega=0$ it was proved in \cite[p.10]{FY1}. Assume $\omega\ne0$.
Let us first consider the case $k\in \gamma_n$.
Denote $G_1(k):=|g_1(k)|^2e^{-4|{\rm Im}k|}$, and
\begin{equation*}
  c_n:=\inf_{k\in \gamma_n}G_1(k).
\end{equation*}
It obvious that $c_n>0$ for all $n\ge0$. Let us show that the sequence $\{c_n\}_{n\ge0}$ has a positive lower bound $c_\varepsilon$ which is independent of $n$. To this end,
recall $k=\sigma+i\tau$. By a direct calculation, we have
\begin{align}\label{ap16}
 \notag |g_1(k)|^2= &16\omega^2(\sigma^2+\tau^2)+q(1)^2(e^{-4\tau}+e^{4\tau})-2q(1)^2\cos4\sigma\\
 &+8\omega q(1)[(e^{-2\tau}+e^{2\tau})\sigma\sin 2\sigma-(e^{-2\tau}-e^{2\tau})\tau\cos2\sigma].
\end{align}
It follows that
\begin{align}\label{ap16}
 \notag G_1(k)= &16\omega^2(\sigma^2+\tau^2)e^{-4\tau}+q(1)^2(e^{-8\tau}+1)-2q(1)^2e^{-4\tau}\cos4\sigma\\
 &+8\omega q(1)[(e^{-6\tau}+e^{-2\tau})\sigma\sin 2\sigma-(e^{-6\tau}-e^{-2\tau})\tau\cos2\sigma].
\end{align}

From (\ref{ap11}) and (\ref{ap12}) we have $\sigma_n e^{-2\tau_n}\to \left|\frac{q(1)}{4\omega}\right|$ as $n\to\infty$. It follows that when $k\in \gamma_{\tau_n}^\pm$,
\begin{align}\label{ap17}
 \sigma e^{-2\tau}=(\sigma_n+t)e^{-2(\tau_n\pm \varepsilon)}\to \left|\frac{q(1)}{4\omega}\right|e^{\mp2\varepsilon} , \quad n\to\infty,
\end{align}
and
\begin{equation}\label{ap18}
\sin 2\sigma=\sin 2(\sigma_n+t)=\pm \cos 2t+o(1) \quad \text{when} \quad \mp\frac{q(1)}{\omega}>0,
\end{equation}
here $t\in[-\varepsilon,\varepsilon]$. Substituting (\ref{ap17}) and (\ref{ap18}) to (\ref{ap16}) we have
\begin{equation}\label{ap19}
  G_1(k)=q(1)^2(1+e^{\mp 4\varepsilon}-2e^{\mp 2\varepsilon}\cos 2t)+o(1),\quad k\in \gamma_{\tau_n}^\pm, \quad n\to\infty.
\end{equation}
Denote $p_1(t):=1+e^{\mp 4\varepsilon}-2e^{\mp 2\varepsilon}\cos 2t$. It is easy to see that $p_1(t)$ has minimum value at $t=0$. Thus, we have
\begin{equation}\label{ap19s}
  G_1(k)\ge q(1)^2(1-e^{\mp 2\varepsilon})^2+o(1),\quad k\in \gamma_{\tau_n}^\pm, \quad n\to\infty.
\end{equation}

Similar to (\ref{ap19}), we can also obtain that for $k\in \gamma_{\sigma_n}^\pm$,
\begin{equation}\label{ap20}
  G_1(k)=q(1)^2(1+e^{- 4t}-2e^{- 2t}\cos 2\varepsilon)+o(1), \quad n\to\infty.
\end{equation}
  where $t\in[-\varepsilon,\varepsilon]$. Denote $p_2(t):=1+e^{- 4t}-2e^{- 2t}\cos 2\varepsilon$.  One can easily obtain that $p_2'(t)>0$ for $e^{-2t}<\cos(2\varepsilon)$ and $p_2'(t)<0$ for $e^{-2t}>\cos(2\varepsilon)$. Thus, it follows that
  \begin{equation}\label{ap20s}
  G_1(k)\ge q(1)^2\sin^2 2\varepsilon+o(1), \quad k\in \gamma_{\sigma_n}^\pm,\quad n\to\infty.
\end{equation}
Together with (\ref{ap19s}) and (\ref{ap20s}), we have proved $c_n>c_\varepsilon>0$.

Now let us pay attention to case $k\in \Gamma_n$. Denote
\begin{equation*}
  C_n:=\inf_{k\in \Gamma_n}G_1(k).
\end{equation*}
Since $g_1(k)$ is odd and satisfies $g_1(k)^*=g_1(-k^*)$, we only need to consider $k\in\Gamma_n\cap\{k:{\rm Re}k\ge0,{\rm Im}k\ge0\}$.

For $k\in\{k:0\le\sigma\le(n+1)\pi,\tau=(n+1)\pi\}$, we have
\begin{equation}\label{ap21}
  G_1(k)= q(1)^2+o(1).
\end{equation}

For $k\in\{k:\sigma=(n+1)\pi,\frac{1}{4}\log n\le\tau\le (n+1)\pi\}$, or equivalently, $\sigma=\sigma_n+(n+1)\pi-\sigma_n$ and $\tau=\tau_n+t$ with $t\in[\frac{1}{4}\log n-\tau_n,(n+1)\pi-\tau_n]$, similar to (\ref{ap20}), and using (\ref{ap11}) and (\ref{ap12}), we have
\begin{equation}\label{ap22}
  G_1(k)=q(1)^2(1+e^{- 4t})+o(1), \quad n\to\infty.
\end{equation}

For $k\in\{k:\sigma=(n+1)\pi,0\le\tau\le \frac{1}{4}\log n\}$, we get
\begin{equation}\label{ap23}
  G_1(k)=16\omega^2(n+1)^2\pi^2e^{-4\tau}[1+o(1)]\to\infty, \quad n\to\infty.
\end{equation}
Together with (\ref{ap21})-(\ref{ap23}), we obtain that $\{C_n\}_{n\ge0}$ has a positive lower bound $C'$. Taking $C_\varepsilon=\min\{c_\varepsilon,C'\}$, we complete the proof.
\end{proof}
Using Lemma \ref{lemma3.2} and (\ref{uo1}) we get that $|g_1(k)|>|8ikD(k)-g_1(k)|$ for $k\in \Gamma_n\cup \gamma_n$ for large $n$. By the Rouch\'{e}'s theorem, we conclude that the number of zeros of the function $kD(k)$ coincides with the number of zeros of $g_1(k)$ inside $\Gamma_n$ or $\gamma_n$. It follows from Lemma \ref{lemma3.0} that all sufficiently large zeros of $D(k)$ are simple.
 By Lemma \ref{lemma3.1}, the zeros of the function $D(k)$ can be numbered as follows: when $\frac{q(1)}{\omega}<0$ denote  the zeros  of $D(k)$ by $\{\pm k_n\}_{n\ge0}\cup\{\pm k_n^*\}_{n\ge1}$; when $\frac{q(1)}{\omega}>0$ denote  the zeros  of $D(k)$ by $\{\pm k_n\}_{n\ge0}\cup\{\pm k_n^*\}_{n\ge0}$. Moreover,
\begin{equation}\label{ap24}
  k_n=\mu_n+\varepsilon_n,\quad \varepsilon_n=o(1),\quad n\to\infty.
\end{equation}
Let us estimate $\varepsilon_n$. Substituting (\ref{ap24}) into (\ref{uo1}), by a direct calculation, we have
\begin{equation}\label{uo0}
  \sin 2\varepsilon_n=\frac{2iq(1)\sin2\mu_n\cos2\varepsilon_n+4i\omega(\mu_n+\varepsilon_n)+\int_0^2K(t)\sin (\mu_n+\varepsilon_n)tdt}{-2iq(1)\cos2\mu_n}.
\end{equation}
Using the asymptotics of $\mu_n$, and noting $g_1(\mu_n)=0$ and $K(\cdot)\in L^2(0,2)$, we get  $\{\varepsilon_n\}\in l^2$. .

If $q\in W_2^2[0,1]$, then we can substitute (\ref{ap24}) into (\ref{t8}), and obtain the more accurate expression of $\varepsilon_n$. Indeed, using (\ref{ap11}) and (\ref{ap12}) we have
\begin{equation}\label{uo3}
  \cos2(\mu_n+\varepsilon_n)=\frac{2n\pi \omega i}{q(1)}[1+O(n^{-1})]= \cos2\mu_n,\; \frac{1}{\mu_n+\varepsilon_n}=\frac{1}{n\pi}\left[1+O\left(\frac{\log n}{n}\right)\right].
\end{equation}
Comparing  (\ref{uo1}) with (\ref{t8}), we have
\begin{equation}\label{uo4}
  \int_0^2 K(t)\sin kt dt=\frac{Q_1i\cos2k}{k}+\frac{Q_2i}{k}+\frac{\int_0^2K_1(t)\cos ktdt}{k},
\end{equation}
where $K_1(\cdot)\in L^2(0,2)$. It follows from (\ref{uo3}) and  (\ref{uo4}) that
\begin{equation}\label{uo5}
  \int_0^2 K(t)\sin (\mu_n+\varepsilon_n)t dt=-\frac{2Q_1\omega}{q(1)}\left(1+\alpha_n\right),
\end{equation}
where $\{\alpha_n\}\in l^2$. Substituting (\ref{uo5}) into (\ref{uo0}), we obtain
\begin{equation}\label{uo6}
  \varepsilon_n=-\frac{Q_1}{4n\pi q(1)}+\frac{\beta_n}{n},\quad \{\beta_n\}\in l^2.
\end{equation}

When $\omega=0$, the asymptotics of zeros of $D(k)$ can be obtained easier (see, e.g., \cite{FY1}).
Let us summarize what we have proved.
\begin{theorem}\label{th4.1}

(i) Assume $\omega:=\int_0^1q(t)dt\ne0$ and $q(1)\ne0$. Denote the transmission eigenvalues of the problem $R(q,h)$  by $\{\lambda_n\}_{n\ge0}\cup\{\lambda_n^*\}_{n\ge1}$ if $\frac{q(1)}{\omega}<0$; by $\{\lambda_n\}_{n\ge0}\cup\{\lambda_n^*\}_{n\ge0}$ if $\frac{q(1)}{\omega}>0$. If $q\in W_2^1[0,1]$ then the sequence $\{\lambda_n\}_{n\ge0}$ has the following asymptotics
 \begin{equation*}
   \sqrt{\lambda_n}=\mu_n+\varepsilon_n,\quad \{\varepsilon_n\}\in l^2.
 \end{equation*}
 If $q\in W_2^2[0,1]$ then the sequence $\{\lambda_n\}_{n\ge0}$ has the following asymptotics
 \begin{equation*}
   \sqrt{\lambda_n}=\mu_n-\frac{Q_1}{4n\pi q(1)}+\frac{\beta_n}{n},\quad \{\beta_n\}\in l^2.
 \end{equation*}
 Here the asymptotics of $\{\mu_n\}$ is given in (\ref{ap11}) and (\ref{ap12}), and $Q_1$ appears in (\ref{t9}).

 (ii)  Assume  $\omega=0$ and $q(1)\ne0$. Then the transmission eigenvalues, denoted  by $\{\lambda_n\}_{n\ge1}$, have the following asymptotics
  \begin{equation}\label{jal}
   \sqrt{\lambda_n}=\frac{n\pi}{2}+\alpha_n,\quad \{\alpha_n\}\in l^2\quad \text{if}\quad q\in W_2^1[0,1];
 \end{equation}
 \begin{equation}\label{jal}
   \sqrt{\lambda_n}=\frac{n\pi}{2}-\frac{Q_1+(-1)^nQ_2}{2q(1)n\pi }+\frac{\kappa_n}{n},\quad \{\kappa_n\}\in l^2\quad \text{if}\quad q\in W_2^2[0,1],
 \end{equation}
where $Q_1$ and $Q_2$ are given in (\ref{t9}).
\end{theorem}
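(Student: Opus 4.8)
The statement is a summary, so the plan is to assemble the contour estimate of Lemma \ref{lemma3.2}, the enumeration of Lemma \ref{lemma3.1}, the simplicity of Lemma \ref{lemma3.0}, and the refined identities (\ref{uo0})--(\ref{uo6}) into the four asymptotic formulas. I would begin from the factorization (\ref{uo1}), $8ikD(k)=g_1(k)+\int_0^2K(t)\sin kt\,dt$ with $K\in L^2(0,2)$. On the contours $\Gamma_n$ and $\gamma_n$, Lemma \ref{lemma3.2} gives $|g_1(k)|\ge C_\varepsilon e^{2|{\rm Im}k|}$, while a Riemann--Lebesgue estimate shows the remainder is $o(e^{2|{\rm Im}k|})$ there; hence $|g_1(k)|>|8ikD(k)-g_1(k)|$ and Rouch\'{e}'s theorem equates the number of zeros of $kD(k)$ with that of $g_1(k)$ inside every $\Gamma_n$ and inside each small box $\gamma_n$. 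Feeding in the two conventions of Lemma \ref{lemma3.1} (according to the sign of $q(1)/\omega$) together with the simplicity of the large zeros from Lemma \ref{lemma3.0}, I obtain the numbering $\{\pm k_n\}\cup\{\pm k_n^*\}$ and the coarse localization $k_n=\mu_n+\varepsilon_n$, $\varepsilon_n=o(1)$, i.e.\ (\ref{ap24}). Since $\lambda_n=k_n^2$, this already fixes the indexing claimed in the theorem.

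To upgrade $\varepsilon_n=o(1)$ to $\{\varepsilon_n\}\in l^2$ under $q\in W_2^1$, I substitute $k_n=\mu_n+\varepsilon_n$ into (\ref{uo1}) and use $g_1(\mu_n)=0$ to reach (\ref{uo0}). The denominator $-2iq(1)\cos2\mu_n$ is of order $n$ by (\ref{uo3}); the $\omega$-terms in the numerator collapse (via $2iq(1)\sin2\mu_n=-4i\mu_n\omega$) and, after division by this order-$n$ denominator, contribute only $O(\varepsilon_n^2)+O(\varepsilon_n/n)$, the factor $\mu_n\sim n\pi$ cancelling the $n$. Thus to leading order $\varepsilon_n$ equals $\int_0^2K(t)\sin k_n t\,dt$ divided by a quantity of size $n$. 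The decisive point is that the normalized functions $\sin(k_n t)/\|\sin(k_n\cdot)\|_{L^2(0,2)}$ form a Bessel system, so that $\{\langle K,\overline{\sin(k_n\cdot)}\rangle/\|\sin(k_n\cdot)\|\}\in l^2$; combining this with $\|\sin(k_n\cdot)\|_{L^2(0,2)}=O(e^{2\tau_n}/\sqrt{\tau_n})$ and $e^{2\tau_n}=O(n)$ (read off from (\ref{ap11}) and (\ref{ap12})) gives $\varepsilon_n=O(\tau_n^{-1/2})\cdot l^2=l^2$. This is precisely the terse claim recorded after (\ref{uo0}).

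For the $W_2^2$ refinement I replace (\ref{uo1}) by the sharper expansion (\ref{t8}); comparing the two yields (\ref{uo4}), and since $\cos2k_n$ is of order $n$ its leading term produces the constant limit $\int_0^2K(t)\sin k_n t\,dt=-\frac{2Q_1\omega}{q(1)}(1+\alpha_n)$ with $\{\alpha_n\}\in l^2$, which is (\ref{uo5}). Inserting this into (\ref{uo0}) isolates the deterministic term and leaves an $l^2$ remainder of size $1/n$, i.e.\ (\ref{uo6}), giving the second formula of part (i). For part (ii) the hypothesis $\omega=0$ reduces $g_1(k)$ to $2iq(1)\sin2k$, whose zeros $\{n\pi/2\}$ are real and simple; the same Rouch\'{e}/Bessel scheme, now with bounded ${\rm Im}\,k_n$ so that $\{\sin(k_n\cdot)\}$ is Bessel by the classical argument (cf.\ \cite{FY1}), gives $\sqrt{\lambda_n}=n\pi/2+\alpha_n$ for $q\in W_2^1$, and, substituting $k=\frac{n\pi}{2}+\alpha_n$ into (\ref{t8}) with $\cos2k=(-1)^n+o(1)$ and $\sin2k=(-1)^n2\alpha_n+o(\alpha_n)$, the explicit alternating correction $-\frac{Q_1+(-1)^nQ_2}{2q(1)n\pi}$ for $q\in W_2^2$.

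The genuine difficulty is the $l^2$ bookkeeping of the remainder, that is, the Bessel property invoked above. Because the transmission eigenvalues drift into the complex plane with ${\rm Im}\,\mu_n\sim\frac12\log(2n\pi)$, the relevant exponential system $\{e^{\pm ik_n t}\}$ is not confined to a horizontal strip, so the Paley--Wiener / Plancherel--P\'{o}lya sampling inequality does not apply verbatim: one must first renormalize by the growth factor $e^{\tau_n t}$ and then control the $O(\log n)$ drift of ${\rm Re}\,k_n$ away from $n\pi$ so that the Bessel constant stays uniform in $n$. Once this bound is secured, the rest is the routine substitution already carried out in (\ref{uo0})--(\ref{uo6}) and the elementary trigonometric expansion used in the $\omega=0$ case.
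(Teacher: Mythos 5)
Your proposal follows the paper's proof essentially verbatim: the same reduction $8ikD(k)=g_1(k)+\int_0^2K(t)\sin kt\,dt$ from (\ref{uo1}), the same Rouch\'{e} argument on $\Gamma_n\cup\gamma_n$ using Lemmas \ref{lemma3.0}--\ref{lemma3.2} to fix the numbering and get $k_n=\mu_n+\varepsilon_n$, the same substitution chain (\ref{uo0})--(\ref{uo6}) for the $W_2^2$ refinement via (\ref{t8}), and the same elementary treatment of the case $\omega=0$. The only difference is that you make explicit the $l^2$ bookkeeping (a Bessel/Carleson-type bound for $\{\sin k_n t\}$ with ${\rm Im}\,k_n$ drifting logarithmically, after renormalizing by $e^{2\tau_n}=O(n)$) which the paper compresses into the single remark that ``noting $g_1(\mu_n)=0$ and $K(\cdot)\in L^2(0,2)$, we get $\{\varepsilon_n\}\in l^2$.''
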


 When $q(1)=0$ and $q'(1)\ne0$, the asymptotics of the transmission eigenvalues can be studied similarly. We provide the following theorem without proving it.
\begin{theorem}\label{th4.2}

Assume  $q\in W_2^2[0,1]$,  $q(1)=0$ and $q'(1)\ne0$.

 (i) If $\omega:=\int_0^1q(t)dt\ne0$, then the transmission eigenvalues of the problem $R(q,h)$, denoted  by $\{\lambda_n\}_{n\ge0}\cup\{\lambda_n^*\}_{n\ge1}$ if $\frac{q'(1)}{\omega}>0$; by $\{\lambda_n\}_{n\ge1}\cup\{\lambda_n^*\}_{n\ge1}$ if $\frac{q'(1)}{\omega}<0$, have the following asymptotics
 \begin{equation*}
   \sqrt{\lambda_n}=\mu_n^1+\varepsilon_n,\quad \{\varepsilon_n\}\in l^2,
 \end{equation*}
 where
 \begin{equation*}
   \mu_n^1=\left\{\begin{split}
             &n\pi+i\left[\log (2n\pi)-\frac{1}{2}\log \left(-\frac{q'(1)}{2\omega}\right)\right]\quad\text{if}\quad\frac{q'(1)}{\omega}<0,\\
           &\left(n+\frac{1}{2}\right)\pi+i\left[\log (2n\pi)-\frac{1}{2}\log \left(\frac{q'(1)}{2\omega}\right)\right]\quad\text{if}\quad\frac{q'(1)}{\omega}>0.
           \end{split}\right.
 \end{equation*}

 (ii) If  $\omega=0$, then the transmission eigenvalues, denoted  by $\{\lambda_n^\pm\}_{n\ge1}$, have the following asymptotics
  \begin{equation*}
 \sqrt{\lambda_n^\pm}=n\pi+s^\pm+\alpha_n,\quad \{\alpha_n\}\in l^2,
    \end{equation*}
    where
 \begin{equation*}
  s^\pm=\left\{\begin{split}
                 &\pm\frac{1}{2}\arccos\left(-\frac{Q_2}{q'(1)}\right)\qquad\qquad\qquad \text{if}\quad \left|\frac{Q_2}{q'(1)}\right|<1,\\
 & -\frac{i}{2}\log \left(-\frac{Q_2}{q'(1)}\pm\sqrt{\frac{Q_2^2}{q'(1)^2}-1}\right)\quad \text{if}\quad \left|\frac{Q_2}{q'(1)}\right|>1,
               \end{split}\right.
 \end{equation*}
and $Q_2$ is given in (\ref{t9}).
 \end{theorem}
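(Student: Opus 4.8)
The plan is to repeat the scheme of Sections 3 and 4 with a new leading function. The hypothesis $q(1)=0$ kills the term $q(1)\sin 2k/(4k)$ in \eqref{t8} and forces $Q_1=q'(1)$ by \eqref{t9}. Multiplying \eqref{t8} by $8k^2$ and absorbing the $L^2$-integral into a remainder $R(k)$ that is $o(e^{2|{\rm Im}k|})$ by the standard estimate for $L^2$-kernels, one obtains
\begin{equation*}
8k^2D(k)=4\omega k^2+q'(1)\cos 2k+Q_2+R(k).
\end{equation*}
This isolates the leading function $g_2(k):=4\omega k^2+q'(1)\cos 2k$ when $\omega\ne0$, and $g_2(k):=q'(1)\cos 2k+Q_2$ when $\omega=0$; the argument then reduces to (a) finding the asymptotics of the zeros of $g_2$, and (b) transferring them to the zeros of $k^2D(k)$ by Rouch\'e's theorem.

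For part (i) ($\omega\ne0$) I would analyze the zeros of $g_2$ exactly as for $g_1$ in \eqref{ap5}--\eqref{ap8}. Setting $z:=-2ik$ with ${\rm Im}\,k>0$ and using $\cos 2k=\tfrac12(e^{-z}+e^z)$, the equation $g_2(k)=0$ becomes $e^z=\tfrac{2\omega z^2}{q'(1)}(1+e^{-2z})^{-1}$, that is,
\begin{equation*}
z-2\log z=w_n,\qquad w_n:=-2n\pi i+\log\tfrac{2\omega}{q'(1)}-\log(1+e^{-2z}).
\end{equation*}
The crucial structural point is that the factor $k^2$ (rather than $k$) produces $\kappa=2$ in \eqref{k5}, so Proposition \ref{p3.1} yields $z=w_n+2\log w_n+O(\log|w_n|/|w_n|)$, whence ${\rm Re}\,z=2\log(2n\pi)+O(1)$ and $\log(1+e^{-2z})=O(n^{-4})$ is negligible. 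Writing $k=iz/2$, the real part $\sigma=-\tfrac12{\rm Im}\,z$ lands at $n\pi$ or $(n+\tfrac12)\pi$ according to whether ${\rm Im}\log\tfrac{2\omega}{q'(1)}$ equals $\pi$ or $0$, i.e.\ according to the sign of $q'(1)/\omega$, while the imaginary part $\tau=\tfrac12{\rm Re}\,z$ gives $\log(2n\pi)-\tfrac12\log(\pm q'(1)/2\omega)$; this reproduces $\mu_n^1$.

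Next I would establish the analog of Lemma \ref{lemma3.2}: a lower bound $|g_2(k)|\ge C_\varepsilon e^{2|{\rm Im}k|}$ on $\Gamma_n\cup\gamma_n$, with $\mu_n$ replaced by the zeros of $g_2$. I expect \emph{this} to be the main obstacle, since $G_2(k):=|g_2(k)|^2e^{-4\tau}$ now carries a $16\omega^2|k|^4e^{-4\tau}$ term competing with the $q'(1)^2$ contributions coming from $\cos 2k$, so the case analysis on the sides of the contours (as in \eqref{ap17}--\eqref{ap23}) must be redone around the new balance $\sigma^2e^{-2\tau}\to|q'(1)/(8\omega)|$. Granting this bound, $|g_2(k)|>|8k^2D(k)-g_2(k)|=|Q_2+R(k)|$ on the contours for large $n$, so Rouch\'e's theorem identifies the zeros of $k^2D(k)$ with those of $g_2$; an argument-principle count on $\Gamma_n$ (as after \eqref{qoa}) fixes the indexing, and substituting $k_n=\mu_n^1+\varepsilon_n$ back into \eqref{uo0} together with the $l^2$-summability of the remainder Fourier coefficients yields $\{\varepsilon_n\}\in l^2$.

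For part (ii) ($\omega=0$) the leading function $g_2(k)=q'(1)\cos 2k+Q_2$ is bounded on horizontal strips, so its zeros stay in a bounded neighborhood of the real axis and the problem becomes essentially self-adjoint in character. Solving $\cos 2k=-Q_2/q'(1)$ explicitly gives $k=n\pi\pm\tfrac12\arccos(-Q_2/q'(1))$ when $|Q_2/q'(1)|<1$ (real, two families $\lambda_n^\pm$) and $k=n\pi-\tfrac{i}{2}\log(-Q_2/q'(1)\pm\sqrt{Q_2^2/q'(1)^2-1})$ when $|Q_2/q'(1)|>1$; here $R(k)=o(1)$ relative to $g_2$ on the corresponding contours, so Rouch\'e again transfers the zeros and a substitution argument produces the $l^2$ error $\{\alpha_n\}$. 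The excluded case $|Q_2/q'(1)|=1$ is precisely where $g_2$ acquires double zeros, consistent with the strict inequalities in the statement.
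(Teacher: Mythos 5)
Your proposal is correct and is essentially the paper's own approach: the paper states Theorem \ref{th4.2} without proof, remarking only that it "can be studied similarly" to Theorem \ref{th4.1}, and your argument is precisely that adaptation --- extracting the leading function from \eqref{t8} with $q(1)=0$ (so $Q_1=q'(1)$), solving $z-2\log z=w_n$ via Proposition \ref{p3.1} with $\kappa=2$ (which correctly reproduces $\mu_n^1$ in both sign cases, and likewise the explicit solutions of $\cos 2k=-Q_2/q'(1)$ reproduce $s^\pm$), then transferring zeros by Rouch\'e's theorem with an analog of Lemma \ref{lemma3.2} and fixing the indexing by the argument principle. The one point to tighten in a full writeup is that on the near-real-axis portions of $\Gamma_n$ the constant $Q_2$ is not dominated by $C_\varepsilon e^{2|{\rm Im}k|}$ alone, so dominance there must come from the $16\omega^2|k|^4e^{-4\tau}$ growth of $|g_2|^2e^{-4|{\rm Im}k|}$ (the analog of \eqref{ap23}); this is exactly what your flagged redo of the contour case analysis around the balance $\sigma^2e^{-2\tau}\to|q'(1)/(8\omega)|$ would supply.
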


 \begin{remark}
From Theorems \ref{th4.1} and \ref{th4.2}, we know that if $q(1)\ne0$ or $q'(1)\ne0$, then all the transmission eigenvalues with sufficiently large modulus are simple algebraically. Under the condition of Theorem \ref{th4.1}(i) or Theorem \ref{th4.2}(i), there  exist at most finitely many real transmission eigenvalues; under the condition of Theorem \ref{th4.1}(ii), there  exist at most finitely many non-real transmission eigenvalues.
 \end{remark}

\section{Inverse spectral analysis}
In this section, we consider the inverse spectral problems. In \cite{TA1}, Aktosun and Papanicolaou have provided the uniqueness theorem and reconstruction algorithm for recovering the potential from all transmission eigenvalues, $h$ and $\gamma$ (appearing in (\ref{dk})). We shall prove that some knowledge of the potential can replace the constant $\gamma$.

\begin{theorem}\label{th5.1}
The potential function $q(x)$ is uniquely determined by $h$ and all the transmission eigenvalues if one of the following conditions holds
\\
(i) $\int_0^1q(x)dx\neq 0$ is known;
\\
(ii) $q\in  L^2(0,1)\cap W_2^{m+1}(1-\delta,1]$ with some $m\ge0$ and $\delta\in(0,1)$, and $q^{(v)}(1)=0$ for $v=\overline{0,m-1}$ and $q^{(m)}(1)\ne0$ is known.
\end{theorem}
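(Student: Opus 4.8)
The plan is to prove both uniqueness statements by exploiting the Hadamard factorization $(\ref{dk})$, which shows that the collection of all transmission eigenvalues determines the characteristic function $D(k)$ only \emph{up to the unknown multiplicative constant $\gamma$} (and the order $2s$ of the zero at the origin, which is also encoded in the eigenvalues). Thus the entire problem reduces to a single question: given $h$ and the eigenvalues, what extra scalar datum pins down $\gamma$? Once $\gamma$ is fixed, $D(k)$ is known exactly, and then the uniqueness theorem of Aktosun and Papanicolaou \cite{TA1} (recovery of $q$ from $D(k)$ and $h$) applies verbatim. So in both cases I will work with the known product $E(k)$ and show how the supplementary information forces the value of $\gamma$.

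For part (i), the idea is to read off $\gamma$ from the \emph{leading asymptotic behavior} of $D(k)$. From $(\ref{p2p})$ (or more transparently from $(\ref{t2})$, $(\ref{t8})$, and the definition $(\ref{3.1})$ of $g_1$) one sees that $D(k)\to \omega/2$ as $k\to\infty$ along the real axis, since every other term carries a decaying factor $1/k$ or $1/k^2$ or an oscillatory integral that vanishes by Riemann--Lebesgue. Hence $\lim_{k\to\infty}D(k)=\tfrac{\omega}{2}=\tfrac12\int_0^1 q$. On the other side, the factorized form gives $\gamma\,\lim_{k\to\infty}E(k)$, and the asymptotics of $E(k)$ along the reals is computable purely from the prescribed eigenvalue sequence (using the subscript-indexed asymptotics $(\ref{ap11})$--$(\ref{ap12})$ established in Section~4, which guarantee the infinite product converges and has a determinable growth constant). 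Setting these two limits equal yields $\gamma = (\omega/2)/\lim E(k)$, so knowing $\omega=\int_0^1 q\neq 0$ determines $\gamma$ and we are done by \cite{TA1}.

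For part (ii), the constant $\omega$ may vanish, so the leading term $\omega/2$ is useless; instead I will extract $\gamma$ from a \emph{higher-order} coefficient in the large-$k$ expansion of $D(k)$, the first nonvanishing one, which is governed by $q^{(m)}(1)$. The strategy is to iterate the integration-by-parts expansion that produced $(\ref{t2})$ and $(\ref{t8})$: under the hypothesis $q^{(v)}(1)=0$ for $v=\overline{0,m-1}$, the successive boundary terms involving $q(1),q'(1),\dots,q^{(m-1)}(1)$ all drop out, and the leading surviving oscillatory term in $D(k)$ is of the form $c\,q^{(m)}(1)\,k^{-(m+1)}\sin 2k$ (times $(-1)^{?}$ combinatorial constants and possibly a $\cos$ companion), with $c$ an explicit nonzero universal constant. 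Matching this against the corresponding asymptotic coefficient of $\gamma E(k)$ — which is again computable from the eigenvalue data and their subscript-indexed asymptotics — gives one linear equation relating $\gamma$ and the known number $q^{(m)}(1)$, hence determines $\gamma$. With $\gamma$ known, $D(k)$ is fully recovered and \cite{TA1} finishes the argument.

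The main obstacle, and the part requiring genuine care, is the $m\ge 1$ computation in part (ii): I must justify that repeated integration by parts in $(\ref{p2p})$ really peels off exactly the terms $q^{(v)}(1)$ in order, that the kernel-derivative boundary values $K^{(j)}_{t,\ldots}(0,2)$ continue to reduce to $q^{(j)}(1)$ (up to lower-order contributions that vanish under the assumed flatness conditions) — extending the pattern seen in $(\ref{xx4})$, $(\ref{xxs})$, $(\ref{t6})$ — and that the remainder integrals are genuinely of smaller order so that the coefficient of $k^{-(m+1)}$ is isolated and nonzero precisely because $q^{(m)}(1)\neq 0$. Equally delicate is verifying that the matching asymptotic coefficient of $E(k)$ is extractable from the eigenvalues alone: this relies on the precise subscript-indexed asymptotics of Theorem~\ref{th4.1}, which is exactly why the refined $W_2^2$ asymptotics (with the explicit $Q_1$ term) were developed earlier. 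Once these bookkeeping lemmas are in place, the recovery of $\gamma$, and hence the reconstruction algorithm, follows mechanically.
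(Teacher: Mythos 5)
Your overall reduction is the paper's own: the eigenvalues determine $E(k)$ in (\ref{dk}), so everything hinges on pinning down the single constant $\gamma$, after which the uniqueness theorem of \cite{TA1} applies. Your part (i) is exactly the paper's proof: $D(k)\to\omega/2$ as $k\to+\infty$ along the real axis by Riemann--Lebesgue applied to (\ref{p2p}), giving $\gamma=\frac{1}{2}\int_0^1q\,\bigl[\lim_{k\to+\infty}E(k)\bigr]^{-1}$ as in (\ref{t1}); your additional appeal to the subscript-indexed eigenvalue asymptotics of Section 4 is superfluous, since $E(k)$ is known pointwise once the eigenvalues are given.

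Part (ii), however, contains a genuine gap. Your plan --- iterate the integration by parts that produced (\ref{t2}) and (\ref{t8}) and read off a coefficient $c\,q^{(m)}(1)k^{-(m+1)}\sin 2k$ along the \emph{real} axis --- is not available under the stated hypothesis, which places $q$ in $W_2^{m+1}$ only on $(1-\delta,1]$ and merely in $L^2$ on the rest of $[0,1]$. Even the first integration by parts in (\ref{p2p}) requires $K_x(0,\cdot),K_t(0,\cdot)\in W_2^1[0,2]$, i.e.\ $q\in W_2^1[0,1]$ globally, and the interior roughness of $q$ propagates into $K(0,t)$ at interior points of $(0,2)$ through the integral equation (\ref{xx}). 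More fatally, the contribution of the merely-$L^2$ part of $q$ to $D(k)$ on the real axis is an $L^2$ Fourier transform: it tends to $0$ by Riemann--Lebesgue but with \emph{no rate}, so it cannot be dominated by, or separated from, a term of size $k^{-(m+1)}$; for $m\ge1$ (and already for $m=0$ when $\omega=0$) the coefficient you want to match is invisible on the real axis, so your claim that ``the remainder integrals are genuinely of smaller order'' fails. (Even granting global smoothness, the boundary terms at $t=0$ inject the unknowns $q(0)$, $q'(0)$, $\int_0^1q^2$, $\omega$ at the same orders, as in $Q_2$ of (\ref{t9}).) The paper's proof sidesteps all of this by leaving the real axis: it evaluates the Jost function at $k=i\tau$, $\tau\to-\infty$, via the integral equation (\ref{p1}) and successive approximations (\ref{p2})--(\ref{p3}). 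There the endpoint $t=1$ dominates \emph{exponentially}: the Watson's-lemma computation (\ref{p4}) converts the flatness conditions $q^{(v)}(1)=0$, $v=\overline{0,m-1}$, into the exact prefactor $e^{-2\tau}/(2\tau)^{m+1}$, while the rough bulk contributes only the exponentially smaller error $O(e^{2\tau(\delta-1)})$. This yields $D(i\tau)=-\frac{1}{4}q^{(m)}(1)e^{-2\tau}(2\tau)^{-(m+1)}[1+o(1)]$ as in (\ref{p12}), hence the limit formula (\ref{p13}) for $\gamma$. To repair your argument you must move the asymptotic matching to the imaginary axis, where exponential localization at $x=1$ replaces the rate-of-decay bookkeeping that your real-axis scheme cannot supply.
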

\begin{proof}

(i) From (\ref{p2p}) and (\ref{dk}) we see that if $\int_0^1q(x)dx\ne0$ is given, then
\begin{equation}\label{t1}
  \gamma=\frac{1}{2}\int_0^1q(x)dx\left[\lim_{k\to+\infty}E(k)\right]^{-1}.
\end{equation}
Following the uniqueness theorem in \cite{TA1}, we complete the proof for (i).

(ii) It is known that the Jost solution $f(k,x)$ satisfies $f(k,x)=e^{ikx}p(k,x)$ with
\begin{equation}\label{p1}
p(k,x)=1-\frac{1}{2ik}\int_x^1(1-e^{2ik(t-x)})q(t)p(k,t)dt.
\end{equation}
Take $k=i\tau$ with $\tau<0$ in (\ref{p1}), solve it by the method of successive approximations, and get
\begin{equation}\label{p2}
  p_0(i\tau,x)=1,\quad p_{n+1}(i\tau,x)=\frac{1}{2\tau}\int_x^1(1-e^{-2\tau(t-x)})q(t)p_n(i\tau,t)dt,
\end{equation}
\begin{equation}\label{p3}
  p(i\tau, x)=\sum_{n=0}^{\infty}p_n(i\tau, x).
\end{equation}
It is easy to show
\begin{equation*}
|  p_n(i\tau,x)|\le \frac{e^{-2\tau(1-x)}\left(\int_x^1|q(t)|dt\right)^n}{\tau^nn!},\quad n\ge1.
\end{equation*}
Thus the series (\ref{p3}) is absolutely and uniformly convergent. Let us seek the asymptotics of $p(i\tau,0)$ and $p'(i\tau,0)$ as $\tau\to-\infty$ under the condition in Theorem \ref{th5.1}(ii).

Note that for $x\in[0,1-\delta]$ there holds
\begin{align}\label{p4}
\notag \int_x^1e^{-2\tau t}q(t)dt&=\left(\int_x^{1-\delta}+\int_{1-\delta}^1\right)e^{-2\tau t}q(t)dt\\
&=-\frac{q^{(m)}(1)e^{-2\tau}}{(2\tau)^{m+1}}+O(e^{2\tau(\delta -1)}),\;\;\tau\to-\infty,
\end{align}
for $x\in[0,1]$ there holds
\begin{align}\label{p5}
\notag \left|\int_x^1e^{-2\tau t}q(t)dt\right|&\le\left|\int_x^{1-\delta}e^{-2\tau t}q(t)dt\right|+\left|\int_{1-\delta}^1e^{-2\tau t}q(t)dt\right|\\
&\le \frac{\left|q^{(m)}(1)\right|e^{-2\tau}}{|2\tau|^{m+1}}+O(e^{2\tau(\delta -1)}),\;\;\tau\to-\infty.
\end{align}

In each $p_n(i\tau,t)$, there must have the term $\int_s^1e^{-2\tau \xi}q(\xi)d\xi$ (there also has the term $\left(\int_x^1q(s)ds\right)^n/n!$, which, however, is independent of $e^{-2\tau}$). It follows from (\ref{p2}), (\ref{p3}), (\ref{p4}) and (\ref{p5}) that for $x\in[0,1-\delta]$
\begin{equation}\label{p6}
  p(i\tau,x)=p_1(i\tau,x)[1+o(1)]=\frac{q^{(m)}(1)e^{2(x-1)\tau}}{(2\tau)^{m+2}}[1+o(1)],\;\;\tau\to-\infty.
\end{equation}

Using (\ref{p1}) we have
\begin{equation}\label{p7}
  p'(i\tau,x)=-\int_x^1e^{-2\tau(t-x)}q(t)p(i\tau,t)dt,
\end{equation}
which implies from (\ref{p1}) again  that
\begin{equation}\label{p8}
  p'(i\tau,x)=2\tau(p(i\tau ,x)-1)-\int_x^1q(t)p(i\tau, t)dt.
\end{equation}
It follows from (\ref{p5}), (\ref{p6}) and (\ref{p8}) that
\begin{equation}\label{p9}
  p'(i\tau,x)=\frac{q^{(m)}(1)e^{2(x-1)\tau}}{(2\tau)^{m+1}}[1+o(1)],\quad x\in[0,1-\delta],\;\;\tau\to-\infty.
\end{equation}
Thus, we have
\begin{equation}\label{p10}
  f'(i\tau,0)=p'(i\tau,0)-\tau p(i\tau,0)=\frac{q^{(m)}(1)e^{-2\tau}}{2^{m+2}\tau^{m+1}}[1+o(1)],\;\;\tau\to-\infty.
\end{equation}
Together with (\ref{5s}), (\ref{p6s}) and (\ref{p10}), we obtain
\begin{equation}\label{p11}
  F(i\tau)=-\frac{iq^{(m)}(1)e^{-2\tau}}{2^{m+2}\tau^{m+1}}[1+o(1)],\;\;\tau\to-\infty.
\end{equation}
Substituting (\ref{p11}) into (\ref{6s}), and noting $F(i\tau)\to0$ as $\tau\to+\infty$, we get
\begin{equation}\label{p12}
  D(i\tau)=-\frac{1}{4}\cdot\frac{q^{(m)}(1)e^{-2\tau}}{(2\tau)^{m+1}}[1+o(1)],\;\;\tau\to-\infty.
\end{equation}
Therefore, the constant $\gamma$ in (\ref{dk}) can be uniquely recovered by all the transmission eigenvalues and $q^{(m)}(1)$ by the following formula
\begin{equation}\label{p13}
\gamma=\lim_{\tau\to-\infty}\frac{-q^{(m)}(1)e^{-2\tau}}{4E(i\tau)(2\tau)^{m+1}}.
\end{equation}
Following the uniqueness theorem in \cite{TA1}, we complete the proof.
\end{proof}

\begin{remark}
The reconstruction algorithm for recovering the potential $q(x)$ from all transmission eigenvalues, $h$ and the constant $\gamma$ have been given in \cite{TA1}. This fact together with the formulas (\ref{t1}) and (\ref{p13}) implies the reconstruction algorithms for recovering $q(x)$  from all transmission eigenvalues and $h$ with the non-zero value $\int_0^1q(s)ds$ or $q^{(m)}(1)$.
\end{remark}

\section{The Dirichlet case}
 In the Dirichlet case, (i.e., the boundary conditions (\ref{3}) and (\ref{4}) are respectively replaced by $\psi(0)=0$ and  $\psi_0(0)=0$), the characteristic function $D(k)$ satisfies (see \cite{TA1})
\begin{equation}\label{po1}
 D(k)=\frac{1}{2ik}[f(k,0)-f(-k,0)],
\end{equation}
and
\begin{equation}\label{po2}
 D(k)=\frac{\int_0^1q(x)dx}{2k^2}+o\left(\frac{1}{k^2}\right),\quad |k|\to\infty,\quad k\in \mathbb{R}.
\end{equation}
Since $D(k)$ here is also an even and entire function of $k$ of exponential type. By the Hadamard's factorization theorem,
\begin{equation}\label{po3}
  D(k)=\gamma E_0(k),\quad E_0(k):=k^{2s}\prod_{\lambda_n\ne0}\left(1-\frac{k^2}{\lambda_n}\right),
\end{equation}
where $\gamma$ is constant, $s\ge0$ is the multiplicity of the zero eigenvalue, and $\{\lambda_n\}$ are the nonzero transmission eigenvalues.

Note that (\ref{p6}) implies
\begin{equation}\label{p6s}
  f(i\tau,0)=\frac{q^{(m)}(1)e^{-2\tau}}{(2\tau)^{m+2}}[1+o(1)],\;\;\tau\to-\infty.
\end{equation}
It follows from (\ref{po1}) and (\ref{p6s}) that
 \begin{equation}\label{p14}
   D(i\tau)=- \frac{q^{(m)}(1)e^{-2\tau}}{(2\tau)^{m+3}}[1+o(1)],\quad \tau\to-\infty,
 \end{equation}
which implies
\begin{equation}\label{oppo}
\gamma=\lim_{\tau\to-\infty}\frac{-q^{(m)}(1)e^{-2\tau}}{E_0(i\tau)(2\tau)^{m+3}}.
\end{equation}
By (\ref{po2}) and (\ref{po3}), we also get
\begin{equation}\label{oppo1}
\gamma=\frac{1}{2}\int_0^1q(x)dx\left[\lim_{k\to+\infty}k^2E_0(k)\right]^{-1}.
\end{equation}
The formulas (\ref{oppo}) and (\ref{oppo1}) together with the reconstruction algorithms in \cite{TA4} or \cite{XC0} imply the reconstruction algorithm for recovering the potential from all transmission eigenvalues and the non-zero value $\int_0^1q(s)ds$ or $q^{(m)}(1)$.

Using (\ref{po1}) and the properties of the Jost solution in Section 2, we also get the asymptotics of the function $D(k)$. If $q\in W_2^1[0,1]$ then
\begin{equation}\label{dil}
  D(k)=\frac{\int_0^1q(s)ds}{2k^2}-\frac{q(1)\sin 2k}{4k^3}-\frac{1}{k^3}\int_0^2 K_{tt}(0,t)\sin ktdt.
\end{equation}
If $q\in W_2^2[0,1]$ then
\begin{equation}\label{dil1}
  D(k)=\frac{\omega}{2k^2}-\frac{q(1)\sin 2k}{4k^3}+\frac{Q_3\cos 2k-Q_4}{8k^4}-\frac{1}{k^4}\int_0^2 K_{ttt}(0,t)\cos ktdt,
\end{equation}
where
\begin{equation}\label{ona}
  Q_3=-q'(1)+q(1)\omega,\quad Q_4=-q'(0)+q(0)\omega-\int_0^1q^2(s)ds+\frac{\omega^3}{6}.
\end{equation}

The equation (\ref{dil1}) are almost the same as (\ref{t8}). By the same discussion as that in Section 3, we obtain the following theorem.

\begin{theorem}

(i) Assume $\omega:=\int_0^1q(t)dt\ne0$ and $q(1)\ne0$. Denote the Dirichlet transmission eigenvalues by $\{\lambda_n\}_{n\ge1}\cup\{\lambda_n^*\}_{n\ge1}$ if $\frac{q(1)}{\omega}>0$; by $\{\lambda_n\}_{n\ge0}\cup\{\lambda_n^*\}_{n\ge1}$ if $\frac{q(1)}{\omega}<0$. If $q\in W_2^1[0,1]$ then the sequence $\{\lambda_n\}$ has the following asymptotics
 \begin{equation*}
   \sqrt{\lambda_n}=\mu_n+\varepsilon_n,\quad \{\varepsilon_n\}\in l^2.
 \end{equation*}
 If $q\in W_2^2[0,1]$ then the sequence $\{\lambda_n\}$ has the following asymptotics
 \begin{equation*}
   \sqrt{\lambda_n}=\mu_n+\frac{Q_3}{4n\pi q(1)}+\frac{\beta_n}{n},\quad \{\beta_n\}\in l^2.
 \end{equation*}
 Here the asymptotics of $\{\mu_n\}$ is given in (\ref{ap11}) for the case $\frac{q(1)}{\omega}>0$ and (\ref{ap12}) for the case $\frac{q(1)}{\omega}<0$, and $Q_3$ appears in (\ref{ona}).

 (ii)  Assume $\omega=0$ and $q(1)\ne0$. Then the Dirichlet transmission eigenvalues, denoted  by $\{\lambda_n\}_{n\ge1}$, have the following asymptotics
 \begin{equation}\label{jal1}
  \sqrt{\lambda_n}=\frac{(n+1)\pi}{2}+\alpha_n,\quad \{\alpha_n\}\in l^2\quad \text{if}\quad q\in W_2^1[0,1];
 \end{equation}
 \begin{equation}\label{jal1}
   \sqrt{\lambda_n}=\frac{(n+1)\pi}{2}+\frac{Q_3+(-1)^{n}Q_4}{2q(1)n\pi }+\frac{\kappa_n}{n},\quad \{\kappa_n\}\in l^2\quad \text{if}\quad q\in W_2^2[0,1],
 \end{equation}
where $Q_3$ and $Q_4$ are defined in (\ref{ona}).
\end{theorem}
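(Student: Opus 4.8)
The plan is to rerun the scheme of Sections 3 and 4 essentially verbatim, the only structural changes being a single sign flip and two extra powers of $k$. Starting from the $W_2^1$ identity (\ref{dil}) and multiplying through by $8ik^3$, I obtain the Dirichlet analogue of (\ref{uo1}),
\begin{equation*}
8ik^3D(k)=\tilde g_1(k)+\int_0^2\tilde K(t)\sin kt\,dt,\qquad \tilde g_1(k):=4ik\omega-q(1)(e^{2ik}-e^{-2ik}),
\end{equation*}
with $\tilde K:=-8iK_{tt}(0,\cdot)\in L^2(0,2)$. The decisive observation is that $\tilde g_1$ is exactly the leading function $g_1$ of (\ref{3.1}) with $q(1)$ replaced by $-q(1)$. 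Hence Lemmas \ref{lemma3.0}, \ref{lemma3.1} and \ref{lemma3.2} transfer after this substitution: the zeros of $\tilde g_1(k)/k$ obey (\ref{ap11}) when $-q(1)/\omega<0$, i.e. $q(1)/\omega>0$, and (\ref{ap12}) when $q(1)/\omega<0$ --- precisely the interchange of cases recorded in the statement --- and the lower bound $|\tilde g_1(k)|\ge C_\varepsilon e^{2|{\rm Im}k|}$ on $\Gamma_n\cup\gamma_n$ persists, since the estimate of Lemma \ref{lemma3.2} uses only $q(1)\ne0$ and $\omega\ne0$ and is insensitive to the sign of $q(1)$.

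With this in hand the Rouch\'{e} comparison goes through unchanged. On each contour $\Gamma_n\cup\gamma_n$ the remainder $\int_0^2\tilde K(t)\sin kt\,dt$ is $o(e^{2|{\rm Im}k|})$, so there $|\tilde g_1(k)|>|8ik^3D(k)-\tilde g_1(k)|$, and $k^3D(k)$ has the same total number of zeros inside each contour as $\tilde g_1(k)$. The two extra powers of $k$ relative to (\ref{uo1}) remove a pair of zeros from the count at the origin, which is exactly what produces the shifted index ranges in the statement; modulo this shift one gets $\sqrt{\lambda_n}=\mu_n+\varepsilon_n$ with $\varepsilon_n=o(1)$. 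Feeding this back into the displayed identity and using $\tilde g_1(\mu_n)=0$ together with $\tilde K\in L^2(0,2)$ reproduces the argument of (\ref{uo0}) and yields $\{\varepsilon_n\}\in l^2$, which is part (i) in the $W_2^1$ case.

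For the $W_2^2$ refinement I feed the sharper identity (\ref{dil1}) into the same machinery. Writing $k_n=\mu_n+\varepsilon_n$ and expanding as in (\ref{uo3})--(\ref{uo6}), the first correction is now controlled by $Q_3$ of (\ref{ona}) in place of $Q_1$; because $\tilde g_1$ carries the opposite sign on its $\sin 2k$ term, this correction appears with the opposite sign to (\ref{uo6}), giving $\varepsilon_n=\frac{Q_3}{4n\pi q(1)}+\frac{\beta_n}{n}$ with $\{\beta_n\}\in l^2$, as claimed. Part (ii), the case $\omega=0$, is simpler: then $\tilde g_1(k)=-q(1)(e^{2ik}-e^{-2ik})$ has its zeros at the half-integer multiples of $\pi$, re-indexed as $(n+1)\pi/2$ to discard the zero at the origin, and substituting into (\ref{dil1}) gives the stated $l^2$-asymptotics with $Q_3,Q_4$, the alternating factor $(-1)^n$ arising, just as in part (ii) of Theorem \ref{th4.1}, from evaluating $\cos 2k$ at these nodes.

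The analytic core --- the contour lower bound, the Rouch\'{e} comparison, and the $l^2$ extraction --- is inherited without change, so the genuine work is bookkeeping rather than estimation, and this is where the main obstacle lies. One must keep mutually consistent three effects: the sign flip $q(1)\mapsto-q(1)$ in $\tilde g_1$, which swaps the roles of (\ref{ap11}) and (\ref{ap12}) and of the two numbering conventions; the two additional powers of $k$, which shift the index origin and replace the normalising factor $8ik$ of (\ref{uo1}) by $8ik^3$; and the resulting sign of the $W_2^2$ correction, $+Q_3/(4n\pi q(1))$ against the $-Q_1/(4n\pi q(1))$ of (\ref{uo6}). Verifying that every sign and index shift lines up --- rather than proving any new inequality --- is the delicate point.
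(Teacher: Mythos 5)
Your proposal is correct and takes essentially the same route as the paper, whose entire proof consists of the remark that (\ref{dil1}) is almost the same as (\ref{t8}) and that the discussion of Sections 3--4 applies verbatim. Your reduction $8ik^3D(k)=\tilde g_1(k)+\int_0^2\tilde K(t)\sin kt\,dt$ with $\tilde g_1$ equal to $g_1$ under $q(1)\mapsto -q(1)$ is exactly that discussion carried out explicitly, and your bookkeeping (the swap of cases (\ref{ap11})/(\ref{ap12}), the sign $+Q_3/(4n\pi q(1))$ opposite to (\ref{uo6}), and the index shift caused by the two extra powers of $k$ at the origin) matches the theorem.
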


\noindent {\bf Acknowledgments.}
The research work was supported in part by the National Natural Science Foundation of China (11901304) and the Startup Foundation for Introducing Talent of NUIST.

\end{document}